\begin{document}

\begin{frontmatter}

  \title{A Denotationally-based Program Logic\\ for Higher-order Store}

  \author{Frederik Lerbjerg Aagaard\thanksref{freemail}}
  \author{Jonathan Sterling\thanksref{jonemail}}
  \author{Lars Birkedal\thanksref{laremail}}

  \address{Department of Computer Science\\ Aarhus University\\ Aarhus, Denmark}

  \thanks[freemail]{Email: \href{mailto:aagaard@cs.au.dk}
    {\texttt{\normalshape aagaard@cs.au.dk}}}
  \thanks[jonemail]{Email: \href{mailto:jsterling@cs.au.dk}
    {\texttt{\normalshape jsterling@cs.au.dk}}}
  \thanks[laremail]{Email: \href{mailto:birkedal@cs.au.dk}
    {\texttt{\normalshape birkedal@cs.au.dk}}}

  \begin{abstract}
    Separation logic is used to reason locally about stateful programs. State
    of the art program logics for higher-order store are usually built on top
    of untyped operational semantics, in part because traditional denotational
    methods have struggled to simultaneously account for general references and
    parametric polymorphism. The recent discovery of simple denotational
    semantics for general references and polymorphism in synthetic guarded
    domain theory has enabled us to develop \LMuRef, a higher-order separation
    logic over the \emph{typed equational theory} of higher-order store for a
    monadic version of System~\FMuRef{}.
    The \LMuRef{} logic differs from operationally-based program logics in two
    ways: predicates range over the \emph{meanings} of typed terms in an
    arbitrary model rather than over the \emph{raw code} of untyped terms, and
    they are automatically invariant under the equational congruence of
    higher-order store, which applies even underneath a binder. As a result,
    ``pure'' proof steps that conventionally require focusing the Hoare triple
    on an operational redex are replaced by a simple equational rewrite in
    \LMuRef.  We have evaluated \LMuRef{} against standard examples involving
    linked lists in the heap, comparing our abstract equational reasoning with
    more familiar operational-style reasoning. Our main result is the soundness
    of \LMuRef{}, which we establish by constructing a BI-hyperdoctrine over
    the denotational semantics of \FMuRef{} in an impredicative version of
    synthetic guarded domain theory.
  \end{abstract}

  \begin{keyword}
    denotational semantics, higher-order separation logic, higher-order store, general references, guarded recursion, synthetic guarded domain theory, BI-hyperdoctrine, impredicative polymorphism, recursive types
  \end{keyword}
\end{frontmatter}

%
%
%
\section{Introduction}

State of the art program logics such as Iris~\cite{iris:2018} and the Verified
Software Toolchain~\cite{appel:2011} typically combine two design decisions:
the basic sorts of the logic are generated by the \emph{raw, untyped terms} of
the programming language, and constructs of the program logic are defined in
terms of operational semantics layered over this raw syntax. In contrast, the
original program logic
LCF~\cite{gordon-milner-wadsworth:1979,plotkin:1977,paulson:1987,gordon:2000}
and successors like HOLCF~\cite{huffman:2012} work in a more abstract way:
predicates range over \emph{typed denotations} rather than over untyped raw
syntax. The difference in abstraction is substantial: whereas a predicate on
raw syntax need not even be invariant under $\alpha$-equivalence, a predicate
on denotations is automatically invariant under the \emph{entire} equational
theory of the language. Nevertheless, denotationally-based program logics have
fallen by the wayside in part because of the historic failure of denotational
methods to provide simple enough answers to the questions posed by higher-order
effectful programming, \eg the combination of general references, polymorphism,
and concurrency; as a result, the benefits of denotations are primarily reaped
today in the first-order realm, as in the successful revival of coinductive
resumption semantics by Xia~\etal~\cite{itrees:popl:2019}.

Denotational and algebraic approaches to program semantics have other benefits
besides their abstractness: they are more modular, and they are more directly
compatible with practical mathematical tools from category theory and topology,
which have become increasingly relevant in light of recent interest in
probabilistic and differentiable computing for the sciences. We therefore have
ample motivation to bridge the gap between denotational methods and modern
logics for reasoning about programs.

\subsection{Denotational semantics of general references and polymorphism}

Sterling \etal~\cite{sterling-gratzer-birkedal:2022} have recently discovered a simple
denotational semantics for general references and polymorphism, decomposing the
problem of higher-order store with semantic worlds into two separate constructs
that are easily combined: impredicativity and guarded recursion. At the heart
of their development is \DefEmph{Impredicative Guarded Dependent Type Theory}
(\iGDTT{}), a type theoretic metalanguage for synthetic guarded domain
theory~\cite{bmss:2011} extended by an impredicative universe of sets.
Although \opcit have made promising strides in the denotational understanding
of higher-order stateful programs, non-trivial \emph{reasoning} about state
requires more than equational logic.  Operationally-based program logics like
Iris employ \DefEmph{guarded higher-order separation logic} for this purpose; our
contribution in this paper is to adapt these methods to the denotational semantics of
Sterling~\etal.

\subsection{Higher-order separation logic over denotational semantics}

We extend the \emph{monadic equational theory} of higher-order store for
System~\FMuRef{} to a full higher-order separation logic called \LMuRef. In
contrast to systems like Iris which layer the separation logic over an
encoding of the \emph{raw syntax} of the object language, predicates in
\LMuRef{} range not over code but rather over the actual types of \FMuRef{},
subject to the full equational congruence.  In a conventional
operationally-based program logic, equational reasoning is rather only applicable
within the focus of a Hoare triple and is moreover limited to operational head
reduction and cannot occur under binders, \etc.

Our main contribution is the soundness of \LMuRef{} for reasoning about
Sterling~\etal's denotational semantics of \FMuRef{}, which we establish by
constructing a suitable
BI-hyperdoctrine~\cite{biering-birkedal-torp-smith:2007}. We have made
pervasive use of type theoretic internal languages to facilitate this work:
first, System~\FMuRef{} is intepreted in \emph{internal co-presheaves} on a
poset of semantic heap layouts $\Worlds$ defined in \iGDTT{} by solving a guarded
recursive domain equation; then, our BI-hyperdoctrine is itself constructed in
the type theoretic language of co-presheaves on $\Worlds$. Because \iGDTT{}
itself is sound~\cite{sterling-gratzer-birkedal:2022}, it follows that our logic is sound.

\subsection{Structure of this paper}

In \cref{sec:L-mu-ref} we describe the \LMuRef{} logic and its computational
substrate \FMuRef{}; in \cref{sec:example}, we illustrate the use of \LMuRef{}
through a worked example involving linked lists in the heap. In
\cref{sec:igdtt}, we recall the \iGDTT{} metalanguage and the basics of
synthetic guarded domain theory, which we use in
\cref{sec:semantics-of-store,sec:semantics-of-predicates} to explain the
denotational semantics of higher-order store and the interpretation of guarded
higher-order separation logic over this model. In \cref{sec:soundness}, we show
that \LMuRef{} is sound with respect to this model and therefore consistent.

\section{\LMuRef: an equational program logic for higher-order store}\label{sec:L-mu-ref}

In this section, we introduce a \LMuRef{} program logic that extends the
monadic equational logic of System~\FMuRef{} with the connectives
and rules of guarded higher-order separation
logic~\cite{bizjak-birkedal:2018}, together with a built-in connective for
\emph{weakest preconditions}. The \LMuRef{} logic is related to System~\FMuRef{} in roughly the
same way that \textbf{LCF} relates to \textbf{PCF}~\cite{plotkin:1977,scott:1993}.

\LMuRef{} will have two kinds of types --- \DefEmph{``program types''} and
\DefEmph{``logical types''} --- and to efficiently organize the formalism, we
will define the typehood judgment schematically in a flag
$\Hl{\iota\in\brc{\FlagP,\FlagL}}$.\footnote{Both System~F \emph{and}
higher-order logic are modeled by impredicative universes, but it is well-known
that one impredicative universe cannot contain another~\cite{coquand:1986},
which leads us to introduce the stratification of logical types from program types. The prime example of a logical
type that is not a program type will be the type of all propositions.} With
this convention in hand, the forms of
judgment of \LMuRef{} classify type contexts \mbox{\JdgBox{\IsTpCtx{\Xi}},}
element contexts \mbox{\JdgBox{\IsCtx{\Xi}{\Gamma}},} types
\mbox{\JdgBox{\IsTp{\iota}{\Xi}{A}},} program elements
\mbox{\JdgBox{\IsEl{\Xi}{\Gamma}{a}{A}},} propositions
\mbox{\JdgBox{\IsProp{\Xi}{\Gamma}{\phi}}}, and entailments
\mbox{\JdgBox{\Entails{\Xi}{\Gamma}{\phi}{\psi}}.}  Every syntactic construct
of \LMuRef{} is subject to a relation of \DefEmph{judgmental equality}
$\prn{\equiv}$ which is respected by all judgments and preserved by all
operations; for instance, we shall write $\Hl{\EqEl{\Xi}{\Gamma}{u}{v}{A}}$ to
mean that $u$ and $v$ are judgmentally equal elements of type $A$.

\begin{notation}
  When specifying the rules of \LMuRef{}, we will often write entailments
  $\Hl{\Entails{\Xi}{\Gamma}{\phi}{\psi}}$ as $\Hl{\BareEntails{\phi}{\psi}}$
  when the contexts do not change from premise to conclusion.
\end{notation}

\subsection{Rules for types and elements}

Type contexts $\Hl{\IsTpCtx{\Xi}}$ contain variables $\Hl{\alpha}$ ranging over
\emph{program types} only. Element contexts $\Hl{\IsCtx{\Xi}{\Gamma}}$ contain
variables $\Hl{x:A}$ ranging over elements of types
$\Hl{\IsTp{\iota}{\Xi}{A}}$. Both program types and logical types are
closed under function spaces, cartesian products, inductive types, and universal quantification over program types; program
types are furthermore closed under a computational monad in the sense of
Moggi~\cite{moggi:1991}, as well as existential and recursive
types. The only subtlety is that the $\texttt{unfold}$ destructor for recursive
types is treated \emph{effectfully}, confining recursion to the
monad. Every program type is a logical type, and the inclusion of program types
into logical types commutes up to isomorphism with function spaces, product
types, and inductive types. We will treat these coercions
silently as it causes no ambiguity.

For lack of space, we will not dwell on the product and function types except
to comment that they satisfy the full universal properties of exponentials and
cartesian products up to judgmental equality.  Likewise, we do not present here
the (quite standard) account of inductive types --- except to note that we
close both program and logical types under inductives, and structurally
recursive functions into logical types on program-level data are permitted.

\subsubsection{The computational monad}

The universe of program types is closed under a computational monad $\TpT$ in the
sense of Moggi~\cite{moggi:1991} governing both state and general recursion.
\begin{mathpar}
  \inferrule{
    \IsPTp{\Xi}{A}
  }{
    \IsPTp{\Xi}{\TpT\,A}
  }
  \and
  \inferrule{
    \IsEl{\Xi}{\Gamma}{u}{A}
  }{
    \IsEl{\Xi}{\Gamma}{\TmRet\,u}{\TpT\,A}
  }
  \and
  \inferrule{
    \IsEl{\Xi}{\Gamma}{u}{\TpT\,A}\\
    \IsEl{\Xi}{\Gamma,x:A}{v}{\TpT\,B}
  }{
    \IsEl{\Xi}{\Gamma}{x\leftarrow u; v}{\TpT\,B}
  }
\end{mathpar}

We present the standard equational theory of monads in
\cref{sec:omitted-rules}. We defer our discussion of the monadic operations
for general recursion and state to \cref{sec:effect-rules}.

\subsubsection{General recursion and general references}\label{sec:effect-rules}
Program types are closed under both general recursive types and general
reference types. As the rules for these are somewhat subtle in our monadic
environment, we cover them in detail. First we describe the formation rules for
types and terms, before introducing their equational theory.
\begin{mathpar}
  \inferrule{
    \IsPTp{\Xi,\alpha}{A}
  }{
    \IsPTp{\Xi}{\mu\alpha.A}
  }
  \and
  \inferrule{
    \IsEl{\Xi}{\Gamma}{u}{A\brk{\mu\alpha.A/\alpha}}
  }{
    \IsEl{\Xi}{\Gamma}{\TmFold\,u}{\mu\alpha.A}
  }
  \and
  \inferrule{
    \IsEl{\Xi}{\Gamma}{u}{\mu\alpha.A}
  }{
    \IsEl{\Xi}{\Gamma}{\TmUnfold\,u}{\TpT\,{A\brk{\mu\alpha.A/\alpha}}}
  }
  \\
  \inferrule{
    \IsPTp{\Xi}{A}
  }{
    \IsPTp{\Xi}{\TpRef\,A}
  }
  \and
  \inferrule{
    \IsEl{\Xi}{\Gamma}{u}{\TpRef\,A}
  }{
    \IsEl{\Xi}{\Gamma}{\TmGet\,u}{\TpT\,A}
  }
  \and
  \inferrule{
    \IsEl{\Xi}{\Gamma}{u}{\TpRef\,A}
    \\
    \IsEl{\Xi}{\Gamma}{v}{A}
  }{
    \IsEl{\Xi}{\Gamma}{\TmSet\,u\,v}{\TpT\,\prn{}}
  }
  \and
  \inferrule{
    \IsEl{\Xi}{\Gamma}{u}{A}
  }{
    \IsEl{\Xi}{\Gamma}{\TmNew\,u}{\TpT\,\prn{\TpRef\,A}}
  }
  \and
  \Alert{
    \inferrule{
    }{
      \IsEl{\Xi}{\Gamma}{\TmStep}{\TpT\,\prn{}}
    }
   }
\end{mathpar}

The $\TmStep$ constructor above is a ``no-op'' instruction witnessing the
\emph{guarded} or \emph{intensional} nature of our denotational
semantics;\footnote{In the presence of polymorphism, the guarded interpretation
of state and general recursion seems to be
forced~\cite{birkedal-stovring-thamsborg:2010}.} this is reflected in the
equational theory by the emission of $\TmStep$ instructions with reads to the
heap and the unfolding of recursive types.  Our logic, as we explain later,
will provide rules that allow these steps to be used as fuel for recursive
deductions.
\begin{mathpar}
  \inferrule[unfold-of-fold]{}{
    \BareEqEl{
      \TmUnfold\,\prn{\TmFold\,u}
    }{
      \Alert{\TmStep}; \TmRet\,{u}
    }{
      \TpT\,A\brk{\mu\alpha.A/\alpha}
    }
  }
  \and
  \inferrule[fold-of-unfold]{}{
    \BareEqEl{
      {x\leftarrow \TmUnfold\,u; \TmRet\,\prn{\TmFold\,x}}
    }{
      \Alert{\TmStep};\TmRet\,u
    }{
      \TpT\,\prn{\mu\alpha.A}
    }
  }
  \and
  \inferrule[get-after-set]{}{
    \BareEqEl{
      \TmSet\,u\,v; \TmGet\,u
    }{
      \Alert{\TmStep}; \TmSet\,u\,v; \TmRet\,v
    }{\TpT\,A}
  }
  \and
  \inferrule[set-after-new]{}{
    \BareEqEl{
      \prn{x\leftarrow\TmNew\,u;
      \TmSet\,x\,v;
      w}
    }{
      \prn{x\leftarrow\TmNew\,v;
      w}
    }{\TpT\,B}
  }
  \and
  \inferrule[set-after-set]{}{
    \BareEqEl{
      \TmSet\,u\,v;
      \TmSet\,u\,w
    }{
      \TmSet\,u\,w
    }{\TpT\,A}
  }
  \and
  \inferrule[get-after-get]{}{
    \BareEqEl{
      \prn{x\leftarrow\TmGet\,u;
      y\leftarrow\TmGet\,v;
      w}
    }{
      \prn{y\leftarrow\TmGet\,v;
      x\leftarrow\TmGet\,u;
      w}
    }{\TpT\,C}
  }
  \and
  \inferrule[set-after-get]{}{
    \BareEqEl{
      \prn{
        x\leftarrow\TmGet\,u;
        \TmSet\,u\,x;
        w
      }
    }{
      \prn{
        x\leftarrow\TmGet\,u; w
      }
    }{\TpT\,B}
  }
\end{mathpar}

We also have, but do not present, rules that commute $\TmStep$ past all
primitive effects.

\begin{remark}
  Observe that ours is a theory of \emph{higher-order global
  store} rather than \emph{higher-order local store} in the sense that
  allocations are not hidden but rather have a globally observable effect. The state of the art in denotational
  semantics for local store is currently restricted to references of \emph{ground
  type}~\cite{kammar-levy-moss-staton:2017}.
\end{remark}

\subsubsection{Universal and existential types}

Our language contains both universal and existential types; it is common to
``encode'' the latter in terms of the former, but encodings of this form are
not quite correct as they neglect the equational theory of existential types.\footnote{The correct equational theory of polymorphically-encoded existentials does hold up to parametricity, but parametricity is an emergent property of \emph{syntax}. The purpose of specifying an equational theory is to constrain \emph{all} models, not only the nonstandard parametric models.}
We therefore include both connectives as primitives.
\begin{mathpar}
  \inferrule{
    \IsTp{\iota}{\Xi,\alpha}{A}
  }{
    \IsTp{\iota}{\Xi}{\forall \alpha. A}
  }
  \and
  \inferrule{
    \IsPTp{\Xi,\alpha}{A}
  }{
    \IsPTp{\Xi}{\exists \alpha. A}
  }
  \and
  \inferrule{
    \IsEl{\Xi,\alpha}{\Gamma}{u}{A}
  }{
    \IsEl{\Xi}{\Gamma}{\Lambda\alpha.u}{\forall\alpha.A}
  }
  \and
  \inferrule{
    \IsEl{\Xi}{\Gamma}{u}{\forall\alpha.A}
    \\
    \IsPTp{\Xi}{B}
  }{
    \IsEl{\Xi}{\Gamma}{u\cdot B}{A\brk{B/\alpha}}
  }
  \and
  \inferrule{
    \IsPTp{\Xi}{B}\\
    \IsEl{\Xi}{\Gamma}{u}{A\brk{B/\alpha}}
  }{
    \IsEl{\Xi}{\Gamma}{\TmPack\,\prn{B,u}}{\exists\alpha. A}
  }
  \and
  \inferrule{
    \IsPTp{\Xi}{C}\\
    \IsEl{\Xi}{\Gamma}{u}{\exists\alpha.A}
    \\
    \IsEl{\Xi,\alpha}{\Gamma,x:A}{v}{C}
  }{
    \IsEl{\Xi}{\Gamma}{
      \mathtt{let}\ \mathtt{pack}\,\prn{\alpha,x} = u\ \mathtt{in}\ v
    }{
      C
    }
  }
\end{mathpar}

The equational theory of universals and existentials is presented in \cref{sec:omitted-rules}.

%

\subsection{Rules for the propositional fragment}

The logical layer of \LMuRef{} is a form of \emph{guarded higher-order
separation logic}. We will treat each of these aspects modularly; in
\cref{sec:equational-logic,sec:ihol} we recall the rules of intuitionistic
higher-order equational logic, and in \cref{sec:separation-logic} we recall
(affine) separation logic, and we finish in \cref{sec:guarded-logic} with an
overview of the \emph{later modality} and its L\"ob induction principle, and
how they interact with the rest of the logic.

\subsubsection{Equational logic}\label{sec:equational-logic}

So far the only form of equality that we have considered is \emph{judgmental}
or \emph{external equality} in the sense of Jacobs~\cite[\S3.2]{jacobs:1999};
in order to facilitate equational reasoning in the logic, we add propositional
equality and relate it to judgmental equality using the following rules:
\begin{mathpar}
  \inferrule[equality formation]{
    \strut
    \IsTp{\iota}{\Xi}{A}\\
    \IsEl{\Xi}{\Gamma}{u,v}{A}
  }{
    \IsProp{\Xi}{\Gamma}{\EqProp{A}{u}{v}}
  }
  \and
  \begingroup
  \mprset{fraction={===}}
  \inferrule[Lawvere rule]{
    \strut
    \Entails{\Xi}{\Gamma,x:A}{\phi\brk{x/y}}{\psi\brk{x/y}}
  }{
    \Entails{\Xi}{\Gamma,x:A,y:A}{\phi\land\EqProp{A}{x}{y}}{\psi}
  }
  \endgroup
  \and
  \inferrule[equality reflection]{
    \Entails{\Xi}{\Gamma}{\top}{\EqProp{A}{u}{v}}
  }{
    \EqEl{\Xi}{\Gamma}{u}{v}{A}
  }
\end{mathpar}

As Jacobs~\cite{jacobs:1999} explains, the \textsc{Lawvere} rule entails all
important properties of equality, including congruence for all constructs of
\LMuRef{} and its \FMuRef{} substrate. The \textsc{equality reflection}
rule above is needed to complete the relationship between (unconditional)
judgmental equality and propositional equality without
assumptions.\footnote{Ordinarily, equality reflection would imply the
\textsc{Lawvere} rule but for the lack of propositional assumptions in
judgmental equality.}

\subsubsection{Intuitionistic higher-order logic}\label{sec:ihol}

We assume the usual rules of intuitionistic first-order logic over logical types; in
particular, in addition to implications $\Rightarrow$, conjunctions $\land$,
and disjunctions $\lor$, we may form universal and existential quantifications
$\forall\prn{x:A}.\phi$ and $\exists\prn{x:A}.\phi$ when $A$ is a logical type.
The logic is made higher-order by introducing a logical type classifying all
propositions.
\begin{mathpar}
  \inferrule{
    \strut
  }{
    \IsLTp{\Xi}{\TpProp}
  }
  \and
  \begingroup
  \mprset{fraction={===}}
  \inferrule{
    \IsEl{\Xi}{\Gamma}{\phi}{\TpProp}
  }{
    \IsProp{\Xi}{\Gamma}{\phi}
  }
  \and
  \inferrule{
    \EqEl{\Xi}{\Gamma}{\phi}{\psi}{\TpProp}
  }{
    \IsProp{\Xi}{\Gamma}{\phi\equiv \psi}
  }
  \endgroup
  \and
  \inferrule{
    \Entails{\Xi}{\Gamma}{\chi\land\phi}{\psi}\\
    \Entails{\Xi}{\Gamma}{\chi\land\psi}{\phi}
  }{
    \Entails{\Xi}{\Gamma}{\chi}{\EqProp{\TpProp}{\phi}{\psi}}
  }
\end{mathpar}

\subsubsection{Separation logic for local reasoning}\label{sec:separation-logic}

We assume the standard rules for intuitionistic affine separation logic, in
which we have a separating conjunction $\phi\SepConj\psi$ with unit $\top$, and
separating implications given as right adjoints $\prn{-\SepConj\psi} \dashv
\prn{\psi\SepImp-}$.
\begin{mathpar}
  \begingroup
  \mprset{fraction={===}}
  \inferrule{
    \BareEntails{\chi\SepConj\phi}{\psi}
  }{
    \BareEntails{\chi}{\phi\SepImp\psi}
  }
  \endgroup
  \and
  \inferrule{
    \BareEntails{\phi}{\phi'}
    \\
    \BareEntails{\psi}{\psi'}
  }{
    \BareEntails{\phi\SepConj\psi}{\phi'\SepConj\psi'}
  }
  \and
  \inferrule{}{
    \BareEntails{\phi\SepConj \psi}{\phi\land \psi}
    \\\\
    \BiBareEntails{\prn{\chi\SepConj\phi}\SepConj\psi}{\chi\SepConj\prn{\phi\SepConj\psi}}
    \\\\
    \BiBareEntails{\phi\SepConj\psi}{\psi\SepConj\phi}
  }
  \and
\end{mathpar}

In addition to the separating conjunction and implication, separation logic
contains a \emph{coreflective sublogic} of \DefEmph{persistent propositions},
which are to a first approximation those that are not sensitive to the state of
the heap and can therefore be duplicated freely. In particular, we add an
idempotent comonadic modality $\square$ that takes a proposition to its
``persistent core''; a proposition $\phi$ is then called persistent if the
entailment $\phi\vdash\square\phi$ holds. Persistent propositions in this sense
are closed under all the connectives of intuitionistic first-order logic; we
omit the rules that establish this and instead focus on the interaction between
persistence and the connectives of separation logic:
\begin{mathpar}
  \inferrule{
    \IsProp{\Xi}{\Gamma}{\phi}
  }{
    \IsProp{\Xi}{\Gamma}{\square\phi}
  }
  \and
  \inferrule{
    \BareEntails{\phi}{\psi}
  }{
    \BareEntails{\square\phi}{\square\psi}
  }
  \and
  \inferrule{}{
    \BareEntails{\square\phi}{\phi\SepConj\square\phi}
    \\\\
    \BareEntails{\square\phi}{\square\square\phi}
    \\\\
    \BareEntails{\phi\land\square\psi}{\phi\SepConj\square\psi}
  }
\end{mathpar}

It follows from the above that the separating conjunction of persistent
propositions is their conjunction.

\subsubsection{The later modality and guarded recursion}\label{sec:guarded-logic}

In order to use \LMuRef{} as a logic to reason about general recursion
(including recursion inherent in the heap), it is necessary to introduce the
\DefEmph{later modality} $\PropLtr$; as in prior
works~\cite{appel-mcallester:2001,dreyer-ahmed-birkedal:2009,iris:2018}, the
later modality abstracts away the onerous step-indices of more concrete
accounts of higher-order store leaving only the essential logical structure of
guarded-recursive reasoning. The abstract will meet the concrete, however, when
we illustrate below the interaction between the later modality and the constructions
of our programming language in the \textsc{fold equality} and \textsc{step equality}
rules.
\begin{mathpar}
  \inferrule{
    \IsProp{\Xi}{\Gamma}{\phi}
  }{
    \IsProp{\Xi}{\Gamma}{\PropLtr\phi}
  }
  \and
  \inferrule{}{
    \BareEntails{\phi}{\PropLtr{\phi}}
    \\\\
    \BareEntails{\PropLtr\phi\land\PropLtr\psi}{\PropLtr\prn{\phi\land\psi}}
  }
  \and
  \inferrule{}{
    \BareEntails{\PropLtr\phi\SepConj\PropLtr\psi}{\PropLtr\prn{\phi\SepConj\psi}}
    \\\\
    \BareEntails{\phi\SepImp\psi}{\PropLtr\phi\SepImp\PropLtr\psi}
    \\\\
    \BiBareEntails{\PropLtr\square\phi}{\square{\PropLtr\phi}}
  }
  \\
  \begingroup
  \mprset{fraction={===}}
  \inferrule[fold equality]{
    \BareEntails{\phi}{\PropLtr\prn{\EqProp{A\brk{\mu\alpha.A/\alpha}}{u}{v}}}
  }{
    \BareEntails{
      \phi
    }{
      \EqProp{\mu\alpha.A}{\TmFold\,u}{\TmFold\,v}
    }
  }
  \and
  \inferrule[step equality]{
    \BareEntails{\phi}{\PropLtr\prn{\EqProp{\TpT\,A}{u}{v}}}
  }{
    \BareEntails{\phi}{
      \EqProp{\TpT\,A}{\TmStep;u}{\TmStep;v}
    }
  }
  \endgroup
  \and
  \inferrule[L\"ob induction]{}{
    \BareEntails{\PropLtr\phi\Rightarrow\phi}{\phi}
  }
\end{mathpar}

The \textsc{L\"ob induction} rule above is what makes (guarded) recursive
reasoning possible in \LMuRef{}; the function of the \textsc{fold equality} and
\textsc{step equality} rules is to provide ``fuel'' that can be used to
discharge the later modality in the L\"ob induction hypothesis. This is the
sense in which \LMuRef{} evinces an abstract form of step-indexing: operations
that semantically involve unfolding a recursive domain equation leave behind
abstract steps that can be used to advance in time in relation to the later modality.

\subsubsection{Weakest preconditions}

For reasoning about programs, we introduce a connective called the
\DefEmph{partial weakest precondition}.  Morally, the weakest precondition of a
program and a predicate, as the name suggests, is the weakest proposition that
guarantees the predicate shall hold of any return value of the program. Note,
however, that, despite the name, we make no claim, neither in the logic nor in
its semantics, that it is in fact the weakest such proposition; this is in line
with the usage in Iris~\cite{iris:2018}.
\begin{mathpar}
  \inferrule[wp formation]{
    \IsPTp{\Xi}{A} \\
    \IsEl{\Xi}{\Gamma}{e}{\TpT\, A} \\
    \IsProp{\Xi}{\Gamma, x : A}{\phi}
  }{
    \IsProp{\Xi}{\Gamma}{\Lwp\, e\, \brc{x.\, \phi}}
  }
  \and
  \inferrule[wp-wand]{}{
    \BareEntails{\prn{\forall x.\, \phi \SepImp \psi} \SepConj \Lwp\, e\, \brc{x.\, \phi}}{\Lwp\, e\, \brc{x.\, \psi}}
  }
  \and
  \inferrule[wp-val]{}{
    \BareEntails{\phi \brk{e/x}}{\Lwp\, \prn{\TmRet\, e}\, \brc{x.\, \phi}}
  }
  \and
  \inferrule[wp-bind]{}{
    \BareEntails{\Lwp\, e\Sub{1}\, \brc{x.\, \Lwp\, e\Sub{2}\, \brc{y.\, \phi}}}{\Lwp\, \prn{x \gets e\Sub{1} ; e\Sub{2}}\, \brc{y.\, \phi}}
  }
  \and
  \inferrule[wp-get]{}{
    \BareEntails{\exists x.\, \PointsTo{\ell}{x} \SepConj \rhd \prn{\PointsTo{\ell}{x} \SepImp \phi}}{\Lwp\, \prn{\TmGet\, \ell}\, \brc{x.\, \phi}}
  }
  \and
  \inferrule[wp-set]{}{
    \BareEntails{\prn{\exists y.\, \PointsTo{\ell}{y}} \SepConj \prn{\PointsTo{\ell}{e} \SepImp \phi}}{\Lwp\, \prn{\TmSet\, \ell\, e}\, \brc{\_.\, \phi}}
  }
  \and
  \inferrule[wp-new]{}{
    \BareEntails{\forall x.\, \PointsTo{x}{e} \SepImp \phi}{\Lwp\, \prn{\TmNew\, e}\, \brc{x.\, \phi}}
  }
  \and
  \inferrule[wp-step]{}{
    \BareEntails{\rhd \phi}{\Lwp\, \TmStep\, \brc{\_.\, \phi}}
  }
\end{mathpar}

It is worth comparing the above rules to the rules in
Iris~\cite[Fig.13]{iris:2018}.  Our first four entailments are exactly the
same as the corresponding rules in Iris; note that as in Iris,
\textsc{wp-wand} implies the frame rule. Our rules for \textsc{wp-set} and
\textsc{wp-new} rules differ slightly from those of Iris, which have an
occurrence of the later modality in the antecedent that ours lack. This is
because every operation in Iris takes a step, but in our semantics, only
operations that semantically correspond to unfolding a recursive domain
equation do.

Values in the heap are stored ``one step in the future'', so $\TmGet$ must take
a step before returning; consequently, the postcondition of $\TmGet$ only needs
to hold \emph{later}. However, it is not an issue for $\TmSet$ and $\TmNew$ to
send a value to the future without going there, implying the postcondition must
be known now. This behaviour is reflected in the rules \textsc{get-after-set}
and \textsc{set-after-new}, where the former shows that $\TmGet$ takes a step,
whilst $\TmRet$ does not, and the latter shows that $\TmSet$ does not take a
step.
A second difference to point out is that whilst Iris has a rule for
$\beta$-reduction of functions, we do not. This is because our programming
language is subject to the $\beta/\eta$-equational theory of monadic
$\lambda$-calculus, so rather than having a rule stating that $\Lwp\, \prn{e
\brk{v/x}}\, \brc{y.\, \phi} \vdash \Lwp\, \prn{\prn{\lambda x.\, e}\, v}\,
\brc{y.\, \phi}$, the two propositions are actually \emph{convertible}.

\begin{example}
  It is not difficult to encode the more familiar \DefEmph{Hoare triples} in \LMuRef{},
  using the standard decomposition into persistence, separating implication, and weakest precondition:
  \[
    \brc{\phi}\, e\, \brc{x.\, \psi} \triangleq \square \prn{\phi \SepImp \Lwp\, e\,
    \brc{x.\, \psi}} .
  \]
\end{example}

\subsection{Recursive Functions}

As is standard, recursive types can be used to derive recursive terms via
self-referential types (see \eg Harper~\cite{harper:2016}). Using this approach, we
obtain terms $\TmRec f x e$ for recursive functions typed as follows:
\begin{mathpar}
  \inferrule {
    \IsPTp \Xi {A , B} \\
    \IsEl \Xi {\Gamma , f : A \to \TpT\, B , x : A} e {\TpT\, B}
  } {
    \IsEl \Xi \Gamma {\TmRec f x e} {A \to \TpT\, B}
  }
\end{mathpar}

This derived form satisfies the equation
$\prn{\TmRec f x e} \, x \equiv \mathtt{step} ; e \brk {\TmRec f x e / f}$.
L\"ob induction implies the following weakest precondition rule, reminiscent
of the corresponding Hoare triple rule in Iris~\cite{birkedal-bizjak:2022:iris}:
\begin{mathpar}
  \inferrule[wp-rec]{
    \IsPTp \Xi {A , B} \\
    \IsLTp \Xi C \\
    \IsEl \Xi {\Gamma , z : C , f : A \to \TpT\, B , x : A} e {\TpT\, B} \\
    \IsProp \Xi \Gamma \phi \\
    \IsProp \Xi {\Gamma , z : C , x : A} \psi \\
    \IsProp \Xi {\Gamma , z : C , x : A , y : B} \chi \\
    \Entails \Xi \Gamma {\phi \land \forall z .\, \forall x .\, \psi
      \SepImp \Lwp\, \prn{\prn{\TmRec f x e} \, x}\, \brc {y .\, \chi}} {\forall z .\,
      \forall x .\, \psi \SepImp \Lwp\, \prn{e \brk {\TmRec f x e / f}}\,
      \brc {y .\, \chi}}
  } {
    \Entails \Xi \Gamma {\rhd \phi} {\forall z .\, \forall x .\,
      \psi \SepImp \Lwp\, \prn{\prn{\TmRec f x e} \, x}\, \brc {y .\, \chi}}
  }
\end{mathpar}

\section{Case study: verifying the \emph{append} function on linked lists}\label{sec:example}

In this section, we illustrate the use of \LMuRef{} by an elementary case
study: linked lists in the heap and their append function. The proof is very
similar to the one in Iris~\cite[\S4.2]{birkedal-bizjak:2022:iris}, although
some steps are perhaps slightly simpler as they use equational reasoning rather
than explicit rules for reduction in weakest preconditions. We first define a
recursive type of imperative linked lists on any type $\alpha$:
\[
  \TpLList \alpha \triangleq \mu \rho . \mathtt 1 + \mathtt{ref}\, \prn{\alpha \times \rho} .
\]

Our goal is to define the \emph{append} function on imperative linked lists and
prove that it is correct.  This means, in particular, to show that it behaves
the same as a pure \emph{reference implementation} defined on functional lists.
In order to say what it means for a function on imperative lists to behave like
a function on linked lists, we must first introduce a formal correspondence
between the two types~\cite[Sec.4.2]{birkedal-bizjak:2022:iris}. This can be
done directly as a structurally recursive function in \LMuRef{}.

\NewDocumentCommand\ListInv{}{\mathrel{\approx}}

\begin{construction}[The list invariant]
  We define a correspondence $\prn{\ListInv} : \TpLList{\alpha}\to\TpList{\alpha}\to\TpProp$ in
  \LMuRef{} by structural recursion on the second argument.

  \iblock{
    \mrow{
      \prn{\ListInv} : \TpLList{\alpha} \to \TpList{\alpha} \to \mathtt{prop}
    }
    \mrow{
      l \ListInv \TmNil \triangleq \prn{l = \mathtt{fold} \, \prn{\mathtt{inl}\, \prn{}}}
    }
    \mhang{
      l \ListInv \prn{\TmCons{x}{\mathit{xs}}} \triangleq
    }{
      \mrow{\exists\prn{r : \mathtt{ref}\, \prn{A \times \TpLList{\alpha}}}.}
      \mrow{\exists\prn{l' : \TpLList{\alpha}}.}
      \mrow{
        \prn{l = \mathtt{fold} \, \prn{\mathtt{inr} \, r}}
        \SepConj \PointsTo{r}{\prn{x , l'}}
        \SepConj \prn{l'\ListInv \mathit{xs}}
      }
    }
  }
\end{construction}

Note that whilst elements of type $\TpLList{\alpha}$ could potentially be cyclic,
this is ruled out by the list invariant above: the separating conjunction
consumes the location $r$ so it cannot appear again, and furthermore, a cyclic
list would be infinite and therefore cannot correspond to a functional list.

\begin{construction}[The \emph{append} function]
  We define the append function on linked lists and its pure reference
  implementation on functional lists below.

  \iblock{
    \mrow{
      \LAppend : \TpLList{\alpha} \times \TpLList{\alpha} \to \TpT\, \prn{\TpLList{\alpha}}
    }
    \mhang{
      \LAppend \triangleq
    }{
      \mhang{\TmRec{f}{l_1, l_2}{}}{
        \mrow{z \gets \mathtt{unfold}\, l_1;}
        \mhang{\mathtt{match}\, z\, \mathtt{with}}{
          \mrow{\mathtt{inl}\, \_ \Rightarrow \mathtt{ret} \, l_2}
          \mrow{\mathtt{inr} \, r \Rightarrow \prn{a , l_1'} \gets \mathtt{get} \,
          r ; l_3 \gets f\, \prn{l_1', l_2} ; \mathtt{set} \, r \, \prn{a , l_3}
          ; \mathtt{ret}\, \prn{\mathtt{fold}\,\prn{\mathtt{inr} \, r}}}
        }
      }
    }
  }

  The function above is both impure and general recursive. By contrast, the
  reference implementation below is pure and structurally recursive.

  \iblock{
    \mrow{\prn{\oplus} : \TpList{\alpha} \to \TpList{\alpha} \to \TpList{\alpha}}
    \mrow{\TmNil \oplus \mathit{ys} \triangleq \mathit{ys}}
    \mrow{\prn{\TmCons{x}{\mathit{xs}}}\oplus \mathit{ys} \triangleq \TmCons{x}{\prn{\mathit{xs} \oplus \mathit{ys}}}}
  }
\end{construction}

We can now prove that $\LAppend$ behaves according to the reference
implementation $\prn{\oplus}$.

\begin{theorem}
  The following sequent is derivable in $\LMuRef$:
  \[
    \alpha\mid\cdot\mid\top\vdash
    \forall\prn{u_1,u_2:\TpList{\alpha}; l_1,l_2:\TpLList{\alpha}}.\,
    l_1\ListInv u_1 \SepConj l_2\ListInv u_2
    \SepImp
    \Lwp\, \prn{\LAppend\, \prn{l_1 , l_2}}\, \brc {x .\, x\ListInv {u_1 \oplus u_2}}
  \]
\end{theorem}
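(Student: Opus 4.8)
The plan is to derive the sequent by L\"ob induction, packaged through the derived \textsc{wp-rec} rule, following the same recursive structure as the Iris proof but discharging the ``pure'' manipulations by equational rewriting rather than by reduction rules for weakest preconditions. I would apply \textsc{wp-rec} to $\LAppend \equiv \TmRec{f}{\prn{l_1,l_2}}{\ldots}$, taking the logical context type to be $C \triangleq \TpList{\alpha} \times \TpList{\alpha}$ so that $z$ ranges over the pair of reference lists $\prn{u_1,u_2}$, the precondition $\psi \triangleq l_1 \ListInv u_1 \SepConj l_2 \ListInv u_2$, the postcondition $\chi \triangleq x \ListInv \prn{u_1 \oplus u_2}$, and the invariant $\phi \triangleq \top$. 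Because $\top \vdash \rhd\top$, the conclusion of \textsc{wp-rec} is then exactly the target sequent once the single quantifier block $\forall\prn{u_1,u_2,l_1,l_2}$ is reorganised into the nested $\forall z.\,\forall x$ shape, so everything reduces to discharging the premise.

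The premise requires verifying the body $e\brk{\LAppend/f}$ given the induction hypothesis $\mathrm{IH} \triangleq \forall\prn{u_1,u_2}.\,\forall\prn{l_1,l_2}.\,\psi \SepImp \Lwp\,\prn{\LAppend\,\prn{l_1,l_2}}\,\brc{x.\,\chi}$, which \textsc{wp-rec} supplies \emph{now} rather than under a later. The crucial move is a case analysis on the reference list $u_1$, mirroring the structural recursion that defines both $\ListInv$ and $\oplus$. In the $\TmNil$ case the invariant forces $l_1 \equiv \TmFold\,\prn{\mathtt{inl}\,\prn{}}$, so by \textsc{unfold-of-fold} and the monad laws the body is \emph{judgmentally equal} to $\TmStep; \TmRet\,l_2$; as $\TmNil \oplus u_2 \equiv u_2$, the goal $\Lwp\,\prn{\TmStep; \TmRet\,l_2}\,\brc{x.\,x \ListInv u_2}$ collapses under \textsc{wp-step}, \textsc{wp-bind}, and \textsc{wp-val} to $\rhd\prn{l_2 \ListInv u_2}$, which follows from the hypothesis $l_2 \ListInv u_2$ together with $\phi \vdash \rhd\phi$.

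In the $\TmCons{a}{\mathit{xs}}$ case the invariant supplies a location $r$ and a tail $m$ with $l_1 \equiv \TmFold\,\prn{\mathtt{inr}\,r}$, $\PointsTo{r}{\prn{a,m}}$, and $m \ListInv \mathit{xs}$, and \textsc{unfold-of-fold} again peels off a $\TmStep$ and selects the $\mathtt{inr}$ branch. From here I would walk through the body left-to-right with \textsc{wp-bind}: \textsc{wp-get} on $\TmGet\,r$ consumes $\PointsTo{r}{\prn{a,m}}$ and hands it back under a later; the recursive call $\LAppend\,\prn{m,l_2}$ is settled by instantiating $\mathrm{IH}$ at $\prn{u_1,u_2} \triangleq \prn{\mathit{xs}, u_2}$ and $\prn{l_1,l_2} \triangleq \prn{m,l_2}$, whose precondition $m \ListInv \mathit{xs} \SepConj l_2 \ListInv u_2$ is exactly what is in hand, producing a result $l_3$ satisfying $l_3 \ListInv \prn{\mathit{xs} \oplus u_2}$ after framing $\PointsTo{r}{\prn{a,m}}$ around the call via \textsc{wp-wand}; \textsc{wp-set} updates the cell to $\PointsTo{r}{\prn{a,l_3}}$; and \textsc{wp-val} on $\TmRet\,\prn{\TmFold\,\prn{\mathtt{inr}\,r}}$ reduces the goal to $\TmFold\,\prn{\mathtt{inr}\,r} \ListInv \prn{\TmCons{a}{\prn{\mathit{xs} \oplus u_2}}}$. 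Unfolding $\ListInv$ on the right and instantiating its existentials with $r$ and $l_3$ then closes the goal, using $\PointsTo{r}{\prn{a,l_3}}$, $l_3 \ListInv \prn{\mathit{xs} \oplus u_2}$, and the defining equation $\prn{\TmCons{a}{\mathit{xs}}} \oplus u_2 \equiv \TmCons{a}{\prn{\mathit{xs} \oplus u_2}}$.

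I expect the main obstacle to be the bookkeeping of the later modality. Both $\TmUnfold$ (through the $\TmStep$ it emits) and $\TmGet$ guard the rest of the computation behind a $\rhd$, so the recursive call and every resource feeding it must be transported underneath these laters; this is routine given $\phi \vdash \rhd\phi$, monotonicity of $\rhd$, and its commutation with $\SepConj$ and $\SepImp$, but it is precisely where an operational proof would instead ``take a step'', and where one must check that $\mathrm{IH}$ is invoked at the right guardedness. The pleasant counterpoint is that every rearrangement of the program text---peeling an $\TmUnfold$ off a $\TmFold$, choosing a $\mathtt{match}$ branch, and resequencing binds---is a judgmental equality rather than a weakest-precondition reduction, so these manipulations may be performed silently and even underneath binders.
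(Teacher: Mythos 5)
Your proposal is correct and follows essentially the same route as the paper's proof: L\"ob induction packaged through \textsc{wp-rec} (with the induction hypothesis available \emph{now} and the invariant taken trivially, as in your $\phi = \top$), case analysis on $u_1$, judgmental rewriting via \textsc{unfold-of-fold} and the monad laws followed by \textsc{wp-step}, then \textsc{wp-bind}/\textsc{wp-get}, the induction hypothesis applied through \textsc{wp-wand} with the points-to framed around the recursive call, \textsc{wp-set}, and closure by unfolding the list invariant and instantiating its existentials. The only (immaterial) differences are that you spell out the nil case and the explicit instantiation of \textsc{wp-rec}, which the paper compresses by defining the predicate $Q$ and dismissing the nil case as trivial.
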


\begin{proof}
  Let $\brk{\LAppend}$ be the function defined so
  that $\LAppend\,\prn{l_1,l_2}\equiv \TmRec{f}{l_1,l_2}{\brk{\LAppend}\,f\,\prn{l_1,l_2}}$,
  and let $Q$ be the following predicate:
  \begin{align*}
    Q &: \prn{\TpLList{\alpha}\times\TpLList{\alpha}\to \TpT\,\prn{\TpLList{\alpha}}} \to \TpProp
    \\
    Q\,f &\triangleq
    \forall\prn{u_1,u_2:\TpList{\alpha}; l_1,l_2:\TpLList{\alpha}}.\,
    l_1\ListInv u_1 \SepConj \l_2\ListInv u_2
    \SepImp
    \Lwp\, \prn{f\,\prn{l_1,l_2}}\, \brc {x .\, x\ListInv {u_1 \oplus u_2}}
  \end{align*}

  Our goal is to prove $Q\,\LAppend$; applying the \textsc{wp-rec} rule, it
  suffices to show that $Q\,\LAppend\vdash Q\,\prn{\brk{\LAppend}\,\LAppend}$.
  We proceed by cases on $u_1$; the only non-trivial
  case is the following:
  $
    \BareEntails{
      Q\,\LAppend
      \SepConj
      l_1\ListInv \TmCons{v}{\mathit{vs}} \SepConj l_2\ListInv u_2
    }{
      \Lwp\, \prn{\brk{\LAppend}\,\LAppend\,\prn{l_1,l_2}}\, \brc {x.\, x\ListInv \TmCons{v}{\prn{\mathit{vs} \oplus u_2}}}
    }
  $.
  Rewriting by the defining clause of $\prn{\ListInv}$, we may
  assume $r:\TpRef\,\prn{A\times \TpLList{\alpha}}$ and $s:\TpLList{\alpha}$ to prove the following:

  \iblock{
    \mhang{
      Q\,\LAppend
      \SepConj
      \prn{l_2\ListInv u_2}
      \SepConj
      \PointsTo{r}{\prn{v,s}}
      \SepConj
      \prn{s\ListInv\mathit{vs}}
      \vdash
    }{
      \mrow{
        \Lwp\, \prn{\brk{\LAppend}\,\LAppend\,\prn{\TmFold\,\prn{\mathtt{inr}\,r},l_2}}\, \brc {x.\, x\ListInv \TmCons{v}{\prn{\mathit{vs} \oplus u_2}}}
      }
    }
  }

  Applying equational reasoning (including \textsc{unfold-of-fold}),
  we convert our goal to the following:

  \iblock{
    \mhang{
      Q\,\LAppend
      \SepConj
      \prn{l_2\ListInv u_2}
      \SepConj
      \PointsTo{r}{\prn{v,s}}
      \SepConj
      \prn{s\ListInv\mathit{vs}}
      \vdash
    }{
      \mrow{
        \Lwp\, \prn{
          \TmStep;
          \prn{a,l_1'}\gets\TmGet\,{r};
          l_3\gets \LAppend\,\prn{l'_1,l_2};\TmSet\,r\,\prn{a,l_3};\TmRet\,\prn{\TmFold\,\prn{\mathtt{inr}\,r}}
        }\, \brc {x.\, x\ListInv \TmCons{v}{\prn{\mathit{vs} \oplus u_2}}}
      }
    }
  }

  Applying \textsc{wp-step} and the introduction rule for the later
  modality, it suffices to prove:

  \iblock{
    \mhang{
      Q\,\LAppend
      \SepConj
      \prn{l_2\ListInv u_2}
      \SepConj
      \PointsTo{r}{\prn{v,s}}
      \SepConj
      \prn{s\ListInv\mathit{vs}}
      \vdash
    }{
      \mrow{
        \Lwp\, \prn{
          \prn{a,l_1'}\gets\TmGet\,{r};
          l_3\gets \LAppend\,\prn{l'_1,l_2};\TmSet\,r\,\prn{a,l_3};\TmRet\,\prn{\TmFold\,\prn{\mathtt{inr}\,r}}
        }\, \brc {x.\, x\ListInv \TmCons{v}{\prn{\mathit{vs} \oplus u_2}}}
      }
    }
  }

  Repeatedly applying weakest precondition rules and other administrative rules, it suffices to prove:

  \iblock{
    \mhang{
      Q\,\LAppend
      \SepConj
      \prn{l_2\ListInv u_2}
      \SepConj
      \prn{s\ListInv\mathit{vs}}
      \SepConj
      \PointsTo{r}{\prn{v,s}}
      \vdash
    }{
      \mrow{
        \Lwp\,\prn{\LAppend\,\prn{s,l_2}}\,\brc{
          l_3.\,
          \Lwp\,\prn{\TmSet\,r\,\prn{v,l_3}}\,\brc{
            \_.\,
            {\TmFold\,\prn{\mathtt{inr}\,r}}\ListInv \TmCons{v}{\prn{\mathit{vs} \oplus u_2}}
          }
        }
      }
    }
  }

  Next we use $Q\,\LAppend$ and \textsc{wp-wand} to reduce our goal as
  follows, fixing $l_3:\TpLList{\alpha}$:

  \iblock{
    \mrow{
      \PointsTo{r}{\prn{v,s}}
      \SepConj
      \prn{l_3\ListInv \mathit{vs} \oplus u_2}
      \vdash
      \Lwp\,\prn{\TmSet\,r\,\prn{v,l_3}}\,\brc{
        \_.\,
        {\TmFold\,\prn{\mathtt{inr}\,r}}\ListInv \TmCons{v}{\prn{\mathit{vs} \oplus u_2}}
      }
    }
  }

  Applying \textsc{wp-set}, we arrive at the following goal:

  \iblock{
    \mrow{
      \PointsTo{r}{\prn{v,l_3}}
      \SepConj
      \prn{l_3\ListInv \mathit{vs} \oplus u_2}
      \vdash
      {\TmFold\,\prn{\mathtt{inr}\,r}}\ListInv \TmCons{v}{\prn{\mathit{vs} \oplus u_2}}
    }
  }

  The above is immediate by definition of $\prn{\ListInv}$ and instantiation of existential variables.
\end{proof}

\section{Denotational semantics of \LMuRef{} in impredicative guarded dependent type theory}

The denotational semantics of \LMuRef{} is an extension of the model of
System~\FMuRef{} previously constructed by Sterling, Gratzer, and
Birkedal~\cite{sterling-gratzer-birkedal:2022}. For lack of space, we can only
give a brief introduction to the latter, focusing on the main ideas. The main
ingredient to our semantics is the use of \DefEmph{impredicative guarded
  dependent type theory} (\iGDTT{}) as a sufficiently powerful metalanguage to
admit both the synthetic solution of domain equations for recursively defined
semantic worlds (which uses guarded recursion) \emph{and} the definition of the
store-passing monad (which uses impredicativity). In \cref{sec:igdtt} we give a
brief overview of \iGDTT{}, and we proceed in \cref{sec:semantics-of-store} to
explain the interpretation of higher-order store.

\subsection{Impredicative guarded dependent type theory}\label{sec:igdtt}

Impredicative guarded dependent type theory or \iGDTT{} is roughly the
extension of extensional \DefEmph{guarded dependent type
theory}~\cite{bgcmb:2016,bizjak-mogelberg:2020} by additional (impredicative)
universe structure. We first describe the universes in
\cref{sec:impredicativity}, and then briefly explain the basics of \iGDTT{}'s
synthetic guarded domain theory in
\cref{sec:guarded,sec:guarded-domain-theory}. Finally, we describe a simple
recipe for constructing models of \iGDTT{} in \cref{sec:igdtt-consistency} from
which consistency immediately follows.

\subsubsection{Impredicative universes}\label{sec:impredicativity}

\iGDTT{} adds to ordinary guarded dependent type theory the following
additional universe structure:

\begin{enumerate}

\item We assume an ordinary hierarchy of predicative universes $\UniType_0:
  \UniType_1: \ldots$; when it causes no confusion, we will write $\UniType$ for any
  appropriate $\UniType_i$.

\item We further assume a pair of impredicative universes $\UniSet,\UniP:\UniType_0$
  of small types and proof-irrelevant propositions respectively, where the
  latter satisfies propositional extensionality.\footnote{Note that $\UniP$
    is not an element of $\UniSet$, as this would be
    inconsistent~\cite{coquand:1986}.} Finally, we assume that any element of
  $\UniP$ is also an element of $\UniSet$.

\end{enumerate}

The universe structure above is roughly that of the $\mathtt{Set}$ and
$\mathtt{Prop}$ universes of Coq~\cite{coq:reference-manual} underneath
$\mathtt{Type}$ when the \verb|-impredicative-set| option is activated.
Impredicativity of $\UniSet$ means closure under dependent products
$\Hl{\Forall{x:A}{B}:\UniSet}$ of ``large-indexed'' families $x : A \vdash
B:\UniSet$ when $A:\UniType$, and likewise for $\UniP$. The coherent
impredicative encoding of Awodey, Frey, and
Speight~\cite{awodey-frey-speight:2018} ensures that the full internal
subcategory determined by $\UniSet$ is \emph{reflective} in $\UniType$, and so
a genuine existential $\Hl{\Exists{x:A}B:\UniSet}$ can be obtained by applying
the reflection to the (large) dependent sum $\Hl{\Sum{x:A}{B}:\UniType}$. On
$\UniP$ these are exactly the ordinary universal and existential quantifiers
--- as the reflection is the ``bracket
type'' of Awodey and Bauer~\cite{awodey-bauer:2004}.

In what follows, we shall let $\UU$ stand for any of the universes of
\iGDTT{} described here.

\subsubsection{The later modality and guarded recursion}\label{sec:guarded}

Every universe $\UU$ is closed under a
``later modality'' $\Hl{\Ltr : \UU\to \UU}$ facilitating guarded recursion.
The later modality also satisfies a dependently typed version of the rules of
an applicative functor. Although there are many ways to present this structure,
we choose to follow prior
work~\cite{bgcmb:2016,bizjak-mogelberg:2020,sterling-gratzer-birkedal:2022} by
formulating them using \DefEmph{delayed substitutions}
\JdgBox{\delta\DSubst\Delta}, which we describe simultaneously with the rules
of the later modality:
\begin{mathpar}
  \inferrule[later formation]{
    \delta \DSubst \Delta\\
    \Delta \vdash A\ \mathit{type}
  }{
    \Ltr\brk{\delta}.A\ \mathit{type}
  }
  \and
  \inferrule[later functoriality]{
    \delta\DSubst\Delta\\
    \Delta\vdash a:A
  }{
    \Next*\brk{\delta}.{a} : \Ltr\brk{\delta}.{A}
  }
  \and
  \inferrule[empty dsubst.]{
    \vphantom{\Delta}
  }{
    \cdot\DSubst\cdot
  }
  \and
  \inferrule[extended dsubst.]{
    \delta \DSubst \Delta\\
    a : \Ltr\brk{\delta}.A
  }{
    \prn{\delta, x\leftarrow a} \DSubst \Delta, x:A
  } \and
  \inferrule[later weakening]{
    \delta \DSubst \Delta \\
    a : \Ltr \brk{\delta} . A \\
    \Delta \vdash B\ \mathit{type}
  }{
    \Ltr \brk{\delta, x \gets a} . B = \Ltr \brk{\delta} . B
  } \and
  \inferrule[next weakening]{
    \delta \DSubst \Delta \\
    a : \Ltr \brk{\delta} . A \\
    \Delta \vdash b : B
  }{
    \Next* \brk{\delta , x \gets a} . b = \Next* \brk{\delta} . b
  } \and
  \inferrule[later exchange]{
    \delta \DSubst \Delta \\
    a : \Ltr \brk{\delta} . A \\
    b : \Ltr \brk{\delta} . B \\
    \Delta , x : A , y : B \vdash \delta' \DSubst \Delta' \\
    \Delta , x : A , y : B , \Delta' \vdash C\ \mathit{type}
  }{
    \Ltr \brk{\delta , x \gets a , y \gets b , \delta'} . C = \Ltr \brk{\delta , y \gets b , x \gets a , \delta'} . C
  } \and
  \inferrule[next exchange]{
    \delta \DSubst \Delta \\
    a : \Ltr \brk{\delta} . A \\
    b : \Ltr \brk{\delta} . B \\
    \Delta , x : A , y : B \vdash \delta' \DSubst \Delta' \\
    \Delta , x : A , y : B , \Delta' \vdash c : C
  }{
    \Next* \brk{\delta , x \gets a , y \gets b , \delta'}  . c = \Next* \brk{\delta , y \gets b , x \gets a , \delta'} . c
  } \and
  \inferrule[later force]{
    \delta \DSubst \Delta \\
    \Delta \vdash a : A \\
    \Delta , x : A \vdash B\ \mathit{type}
  }{
    \Ltr \brk{\delta , x \gets \Next* \brk{\delta} . a} . B = \Ltr \brk{\delta} . B \brk{a / x}
  } \and
  \inferrule[next force]{
    \delta \DSubst \Delta \\
    \Delta \vdash a : A \\
    \Delta , x : A \vdash b : B
  }{
    \Next* \brk{\delta , x \gets \Next* \brk{\delta} . a} . b = \Next* \brk{\delta} . b \brk{a / x}
  } \and
  \inferrule[later id]{
    \delta \DSubst \Delta \\
    \Delta \vdash a : A \\
    \Delta \vdash a' : A
  }{
    \prn {\Next* \brk{\delta} . a = \Next* \brk{\delta} . a'} = \Ltr \brk{\delta} . \prn {a = a'}
  } \and
  \inferrule[next variable]{
    \delta \DSubst \Delta \\
    a : \Ltr \brk{\delta} . A
  }{
    \Next* \brk{\delta , x \gets a} . x = a
  }
\end{mathpar}

Then the ordinary later modality $\Ltr : \UU\to \UU$ sends $A:\UU$ to $\Ltr\brk{\cdot}.A$
via the empty delayed substitution; likewise, we shall write $\Next{a}$ for
$\Next*\brk{\cdot}.a$. From these rules, we may deduce that the later modality
forms a \DefEmph{well-pointed endofunctor} $\Ltr:\UU\to\UU$ in the sense of
Kelly~\cite{kelly:1980}, and moreover preserves cartesian products.
The later modality also comes
equipped with a \DefEmph{L\"ob recursor} for defining guarded fixed points as
specified below:
\begin{mathpar}
  \inferrule[L\"ob recursor]{}{
    \Con{gfix} : \prn{\Ltr{A}\to A}\to A
  }
  \and
  \inferrule[L\"ob unfolding]{}{
    \Con{gfix}\,f = f\prn{\Next\prn{\Con{gfix}\,f}}
  }
\end{mathpar}

\subsubsection{Basic synthetic guarded domain theory}\label{sec:guarded-domain-theory}

\begin{definition}\label[definition]{def:guarded-domain}
  A \DefEmph{guarded domain} in $\UU$ is defined to be an algebra for the
  endofunctor $\Ltr:\UU\to\UU$, \ie a type $X:\UU$ equipped with a function
  $\vartheta_X : \Ltr{A}\to A$. A homomorphism from $X$ to $Y$ is then given by
  a function $f : X\to Y$ that commutes with the algebra maps in the sense that
  $\vartheta_Y\circ \Ltr{f} = f\circ\vartheta_X$. We will write $\LtrAlg{\UU}$ for the
  category of guarded domains in $\UU$ and their homomorphisms.
\end{definition}

\begin{example}\label[example]{ex:universe-guarded-domain}
  The universe $\UU$ is a guarded domain in any higher universe $\VV$ as we may
  define $\vartheta_{\UU}:\Ltr{\UU}\to\UU$ to send $A:\Ltr{\UU}$ to the
  delayed type $\Ltr\brk{Z\leftarrow A}.Z$ using the unary delayed substitution
  $\Hl{\prn{Z\leftarrow A}\DSubst Z:\UU}$.
\end{example}

Following Birkedal and M\o{}gelberg~\cite{birkedal-mogelberg:2013}, the L\"ob
recursor can be used to solve domain equations by computing fixed
points on the universe $\UU$. The simplest example of a guarded domain equation
is the one that defines the \DefEmph{guarded lift functor} $\MonadL : \UU\to
\LtrAlg{\UU}$ of Paviotti, M\o{}gelberg, and
Birkedal~\cite{paviotti-mogelberg-birkedal:2015} which sends a type to the free
guarded domain on that type, \ie the left adjoint to the forgetful functor from
guarded domains to types. The domain equation in question is $\MonadL{A} = A+
\Ltr{\MonadL{A}}$, which we solve using the L\"ob recursor on the universe
together with the latter's guarded domain structure:
\begin{align*}
  \MonadL{A} &\triangleq \Con{gfix}\,\prn{\lambda X : \Ltr{\UU}. A + \vartheta_{\UU} X}
  = A + \vartheta_{\UU}\prn{\Next\prn{\MonadL{A}}}
  = A + \Ltr\brk{Z\leftarrow \Next\prn{\MonadL{A}}}.Z
  = A + \Ltr{\MonadL{A}}
\end{align*}

The algebra structure $\vartheta\Sub{\MonadL{A}} :
\Ltr{\MonadL{A}}\to\MonadL{A}$ is given by the right-hand coproduct injection;
the left-hand injection then defines the unit map $\eta : A\to\MonadL{A}$ for
the monad determined by the resulting adjunction between guarded domains and
types.

\begin{construction}[Guarded domains are lift-algebras]
  Any guarded domain $X$ is also an algebra for the monad $\MonadL$. As
  $\MonadL{X}$ is the \emph{free} $\Ltr$-algebra on $X$, there is a unique
  homomorphism of $\Ltr$-algebras $\alpha_X:\MonadL{X}\to X$ such that
  $\alpha_X\circ \eta = \Con{id}_{X}$ which induces an
  $\MonadL$-algebra structure on $X$.
\end{construction}

\begin{construction}[Family lifting for the guarded lift monad]
  \label[construction]{cnstr:guarded-family-lifting}
  Let $\UU$ be a universe, and let $A:\UniType$ be a type. We may lift a
  family $\Phi : A\to \UU$ to a family $\Phi\Sup{\MonadL} : \MonadL{A}\to \UU$
  defined using the induced $\MonadL$-algebra structure on $\UU$, \ie
  $\Phi\Sup{\MonadL} \triangleq \alpha\Sub{\UU} \circ \MonadL{\Phi}$.
  We will occasionally write $\Hl{u \Downarrow \Phi}$ to mean $\Phi\Sup{\MonadL} u$.
\end{construction}

\subsubsection{Consistency and models of \iGDTT{}}\label{sec:igdtt-consistency}

A simple and modular recipe for constructing non-trivial models of \iGDTT{} is
provided by Sterling, Gratzer, and
Birkedal~\cite{sterling-gratzer-birkedal:2022}, from which consistency is
easily deduced.

\begin{theorem}[S., Gratzer, and B.~\cite{sterling-gratzer-birkedal:2022}]
  \label[theorem]{thm:igdtt-model}
  Let $\prn{\mathbb{O},\leq,\prec}$ be a \emph{separated intuitionistic
    well-founded poset}\footnote{We omit the definition of \emph{separated intuitionistic well-founded posets} for brevity and refer the
    reader to Sterling, Gratzer, and
    Birkedal~\cite{sterling-gratzer-birkedal:2022} for details.} in a realizability topos $\mathscr{S}$. Then internal presheaves $\brk{\OpCat{\mathbb{O}},\mathscr{S}}$ give a model of
  \iGDTT{} in which:
  \begin{enumerate}

  \item the predicative universes $\UniType$ are modeled by the Hofmann--Streicher
    liftings~\cite{hofmann-streicher:1997,awodey:2022:universes} of the
    universes of (small) assemblies~\cite{luo:1994} from $\mathscr{S}$;

  \item the impredicative universes $\UniP,\UniSet:\UniType$ are modeled by the
    Hofmann--Streicher liftings of the universes of $\lnot\lnot$-closed
    propositions and of modest sets in $\mathscr{S}$ respectively;

  \item the later modality $\Ltr$ is computed explicitly by the limit
    $\prn{\Ltr{A}}u = \Lim{v\prec u}Av$.

  \end{enumerate}
\end{theorem}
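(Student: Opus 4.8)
The plan is to verify in turn the three pieces of structure demanded by \iGDTT{}: the underlying dependent type theory, the universe hierarchy together with its impredicative part, and the later modality with its L\"ob recursor. Since $\mathbb{O}$ is an internal poset in the realizability topos $\mathscr{S}$, the category of internal presheaves $\brk{\OpCat{\mathbb{O}},\mathscr{S}}$ is again a topos, hence a locally cartesian closed category with finite limits and colimits and a subobject classifier. This already furnishes a sound model of extensional dependent type theory with $\Pi$, $\Sigma$, and identity types, so the substantive work lies entirely in the universes and the later modality.

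For the universe hierarchy I would use the Hofmann--Streicher lifting, which transports a universe in the base $\mathscr{S}$ to a universe in presheaves whose fibre over a world $u$ consists of the small presheaves on the restriction of $\mathbb{O}$ below $u$. Applied to the universes of small assemblies this yields the predicative hierarchy $\UniType_i$ directly, and their closure under the ordinary type formers is inherited pointwise from the base. The delicate point is \emph{impredicativity}: one must check that the liftings of the universe $\UniSet$ of modest sets and of the universe $\UniP$ of $\lnot\lnot$-closed propositions remain closed under products $\prod_{x:A}B$ indexed by \emph{arbitrary} types $A:\UniType$. Here I would argue that such a product is computed in presheaves as a limit over the base of the pointwise products, and that because modest sets (resp.\ $\lnot\lnot$-closed propositions) form an impredicative universe in $\mathscr{S}$ closed under arbitrary limits, the result again lands in the lifted universe. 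The coherent impredicative encoding of Awodey, Frey, and Speight is what makes the associated reflection strictly functorial and so presentable as a type-theoretic universe.

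For the later modality I would take the explicit formula $\prn{\Ltr{A}}u = \Lim{v \prec u} Av$ as a definition and verify the rules by direct computation. Functoriality, the well-pointed structure, and preservation of finite products all reduce to standard manipulations of limits, and the delayed-substitution rules unfold to equalities between such limits once one observes that $\Ltr$ commutes with reindexing along $\prec$. That $\Ltr$ restricts to an endofunctor \emph{of the universe} --- sending a universe-valued family to a universe-valued family --- is where the \emph{separatedness} hypothesis on $\prn{\mathbb{O},\leq,\prec}$ enters, ensuring that the limit defining $\Ltr$ of a family of small types is again small. Finally, the L\"ob recursor and its unfolding equation are obtained by recursion along $\prec$: the \emph{well-foundedness} of the poset guarantees that the recursion terminates and that the guarded fixed point is unique, while the \emph{intuitionistic} character of the relation ensures the recursion principle is available constructively inside $\mathscr{S}$.

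The main obstacle I expect is the impredicativity of the lifted universes. Unlike the purely pointwise type formers, impredicative quantification mixes a large index $A:\UniType$ with small fibres, so one cannot simply transport the base structure fibrewise but must show that the presheaf-level limit computing $\prod_{x:A}B$ does not escape the small universe, and that this is compatible with the strictly coherent encoding used to present the universe as a type. Verifying this interaction is the technical heart of the construction; by comparison the later modality, though it requires care with the separatedness and well-foundedness conditions, is a relatively direct computation with limits.
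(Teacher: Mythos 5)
This theorem is not proved in the paper at all: it is imported wholesale from Sterling, Gratzer, and Birkedal~\cite{sterling-gratzer-birkedal:2022}, with even the definition of \emph{separated intuitionistic well-founded poset} suppressed (see the footnote attached to the statement). So there is no in-paper proof to compare your proposal against; what can be said is that your outline reproduces the architecture of the cited construction: the topos of internal presheaves $\brk{\OpCat{\mathbb{O}},\mathscr{S}}$ models extensional dependent type theory for free; the universes arise by Hofmann--Streicher lifting; impredicativity of the lifted $\UniSet$ and $\UniP$ must be inherited from the base; the later modality is given by the limit formula and the L\"ob recursor by well-founded recursion along $\prec$, with uniqueness of guarded fixed points also established by $\prec$-induction. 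You also correctly locate the technical heart in showing that products of $\UniSet$-valued families indexed by arbitrary large types do not escape the lifted universe.

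Three soft spots are worth flagging. First, the phrase ``modest sets \ldots closed under arbitrary limits'' is doing all the work in your impredicativity argument, and it deserves its proper name and precise form: what is needed is Hyland's \emph{internal completeness} of the full internal subcategory of modest sets in a realizability topos --- closure under products indexed by \emph{arbitrary objects of $\mathscr{S}$}, not merely small ones --- and analogously that $\lnot\lnot$-closed propositions admit arbitrary internal meets. The presheaf-level product $\prod_{x:A}B$ at a world $u$ is an end over the category of elements of $A$ above $u$, a limit of modest sets indexed by an object of $\mathscr{S}$ with no smallness bound whatsoever, so a purely fibrewise transport of the base structure (as for the pointwise type formers) is unavailable and internal completeness is genuinely indispensable. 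Second, your appeal to Awodey--Frey--Speight is slightly misplaced: in this development their coherent impredicative encoding is used to make the inclusion of $\UniSet$ into $\UniType$ \emph{reflective}, yielding genuine existentials; the strict presentability of the lifted universe as a type is supplied by the Hofmann--Streicher lifting itself. Third, you assign ``separatedness'' the job of keeping $\Lim{v\prec u}Av$ small and ``well-foundedness'' the job of grounding the L\"ob recursion; the latter is surely right, but since the paper omits the definition of separatedness, your proposed division of labour among separatedness, intuitionistic well-foundedness, and the compatibility of $\prec$ with $\leq$ is a guess that a complete proof would need to pin down before the smallness argument for $\Ltr$ on each universe can be carried out.
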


\begin{example}
  The simplest example of a model of \iGDTT{} instantiating
  \cref{thm:igdtt-model} is given by the standard order of the natural numbers
  object in Hyland's~\cite{hyland:1982} effective topos $\mathbf{Eff}$; this is
  exactly the ``topos of trees''~\cite{bmss:2011} constructed internally to
  $\mathbf{Eff}$. This model can be adjusted in two orthogonal directions, by
  varying the underlying partial combinatory algebra and by varying the
  internal well-founded order.
\end{example}

\begin{corollary}[S., Gratzer, and B.~\cite{sterling-gratzer-birkedal:2022}]
  \iGDTT{} is consistent.
\end{corollary}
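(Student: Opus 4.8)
The plan is to derive consistency from soundness together with the existence of a model, which is the standard route for any type theory: it suffices to produce a single model of \iGDTT{} in which the empty type (equivalently, the false proposition $\bot$) receives an uninhabited interpretation, for then a closed proof of $\bot$ would yield an element of an empty object, a contradiction. So the entire argument reduces to exhibiting one non-degenerate model and reading off that $\bot$ is uninhabited there.

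First I would invoke \cref{thm:igdtt-model}, which already carries out all the structural work: for any separated intuitionistic well-founded poset $\prn{\mathbb{O},\leq,\prec}$ inside a realizability topos $\mathscr{S}$, the category of internal presheaves $\brk{\OpCat{\mathbb{O}},\mathscr{S}}$ is shown to interpret every construct of \iGDTT{} soundly, including the impredicative universes $\UniP,\UniSet$ and the later modality $\Ltr$. This theorem is the soundness half of the argument, so the only remaining task is to supply one admissible pair $\prn{\mathscr{S},\mathbb{O}}$.

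Second, I would take the concrete instance named in the preceding example: let $\mathscr{S}=\mathbf{Eff}$ be Hyland's effective topos and let $\mathbb{O}$ be its natural numbers object equipped with the standard order, so that $\brk{\OpCat{\mathbb{O}},\mathbf{Eff}}$ is precisely the topos of trees built internally to $\mathbf{Eff}$. The one genuine verification obligation is that $\prn{\mathbb{N},\leq}$ really is a separated intuitionistic well-founded poset in $\mathbf{Eff}$ — where the content lies is well-foundedness of the strict relation $\prec$ and separatedness — after which \cref{thm:igdtt-model} applies verbatim and hands back a model.

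Finally, I would observe that this model is non-degenerate: $\bot$ is interpreted by the pointwise-empty internal presheaf, which has no global sections, so no closed proof of $\bot$ exists and \iGDTT{} is consistent. I do not expect any serious obstacle here: the difficulty is entirely front-loaded into \cref{thm:igdtt-model}, which is assumed, and into checking its hypotheses for the chosen poset. Since those hypotheses are discharged once and for all in the cited construction and the example already identifies an explicit witness, the corollary is essentially immediate.
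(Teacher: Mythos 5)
Your proposal is correct and follows the same route the paper takes: the paper deduces consistency directly from the existence of a non-trivial model supplied by \cref{thm:igdtt-model}, with the topos of trees internal to $\mathbf{Eff}$ as the concrete witness, exactly as you do. Your added observation that $\bot$ is interpreted by the empty presheaf with no global sections is just the standard final step the paper leaves implicit in ``easily deduced.''
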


\subsection{Denotational semantics of general store in \iGDTT{}}\label{sec:semantics-of-store}

We briefly recall the denotational semantics of general store in \iGDTT{} via a
\emph{presheaf model}. We will first construct a preorder $\Worlds$ of semantic
worlds (representing heap layouts), and then we shall interpret \emph{program
types} as $\UniSet$-valued co-presheaves on $\Worlds$ and \emph{logical types}
as $\UniType$-valued co-presheaves on $\Worlds$.

\begin{notation}
  When $P$ is a partial order and $p\leq_P q$, we will write $q_* : Ep \to Eq$
  for the covariant functorial action of any functor $E : P \to \mathscr{E}$.
\end{notation}

\subsubsection{Recursively defined semantic worlds}

In this section, we will define a partial order $\prn{\Worlds, \leq}$ of
semantic worlds simultaneously with the collection of semantic types by
solving a guarded domain equation in $\UniType$. We will ultimately define
$\Worlds$ to be a kind of finite mapping of locations to types, but we must
be more careful than usual because notions like ``finite subtype'' are
somewhat sensitive when $\UniP$ is not a true subobject classifier.

\begin{definition}
  Let $I$ be a totally ordered type; a \DefEmph{finite subtype} $U\FinSubset
  I$ is an element $\vrt{U} : \mathbb{N}$ together with a monotone
  injective\footnote{We mean injective in the general intuitionistic sense:
    elements of the domain are equal if and only if they are identified by the
    function in question.} function $\sigma_U:\Fin{\vrt{U}}\hookrightarrow I$. We
  will often abuse notation by writing $U$ to refer to the image of
  $\sigma_U$ in $I$. There is a (decidable) preorder on finite subtypes given
  by inclusion, which is in fact a partial order because of the monotonicity of
  $\sigma_U$ in the total order $I$.
\end{definition}

\begin{definition}
  Given a totally ordered type $I$, a \DefEmph{finite mapping} $w:I\finto T$
  is given by a finite subtype $\vrt{w}\FinSubset I$ called the
  \DefEmph{support} together with a function $\tau_w : \vrt{w}\to T$ called
  the \DefEmph{labeling}. Given $i\in\vrt{w}$ we shall simply write $wi:T$
  for $\tau_w i$.
  There is a partial order on finite mappings $w:I\finto T$ given by
  inclusion of supports: we say that $w\leq w'$ when $\vrt{w} \leq \vrt{w'}$
  and the restriction of $\tau\Sub{w'}$ to $\vrt{w}$ is equal to $\tau_w$.
  Note that the partial order on finite mappings is not decidable unless $T$
  has decidable equality.
\end{definition}

We can now use the notion of finite mapping to define a partial order of
\DefEmph{semantic worlds} $\Worlds$ simultaneously with the categories
$\SemPType,\SemLType$ of semantic program types and semantic logical types
respectively by solving the following guarded domain equation:
\[
  \Worlds = \mathbb{N}\finto \Ltr{\SemPType}
  \qquad
  \SemPType = \brk{\Worlds, \UniSet}
  \qquad
  \SemLType = \brk{\Worlds, \UniType}
\]
In the above, we have defined a world to be a finite mapping from memory
locations to delayed semantic program types, which are defined to be
\emph{$\UniSet$-valued co-presheaves} on the poset of semantic worlds.
Semantic logical types are defined similarly as $\UniType$-valued
co-presheaves.


The impredicativity of $\UniSet$ is essential for $\SemPType$ to be cartesian
closed and therefore capable of modelling function types. This is because exponentials
of co-presheaves are defined from natural transformations. In $\SemPType$, these
are $\Worlds$-indexed, and thus, since $\Worlds$ is not $\UniSet$-small, were it
not for the impredicativity of $\UniSet$, the homsets would also not be $\UniSet$-small.
The cartesian closure of $\SemLType$ follows simply from $\Worlds$ being
$\UniType$-small.


\begin{observation}
  Note that both $\SemPType$ and $\SemLType$ are guarded domains in the sense
  of \cref{def:guarded-domain}: the structure map
  $\vartheta\Sub{\SemType{\iota}}$ sends $A:\Ltr\SemType{\iota}$ to
  the co-presheaf $w\mapsto \Ltr\brk{Z\leftarrow A}. Zw$.
\end{observation}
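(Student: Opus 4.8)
The plan is to observe that being a guarded domain in the sense of \cref{def:guarded-domain} carries no equational obligations: an algebra for the \emph{endofunctor} $\Ltr$, as opposed to the \emph{monad} $\MonadL$, is merely a type $X$ equipped with a structure map $\Ltr X\to X$, with no laws to verify. Hence the entire content of the claim is to exhibit a well-defined function $\vartheta\Sub{\SemType{\iota}}:\Ltr\SemType{\iota}\to\SemType{\iota}$ for each flag $\iota\in\brc{\FlagP,\FlagL}$, where $\SemType{\iota}$ --- that is, $\SemPType=\brk{\Worlds,\UniSet}$ or $\SemLType=\brk{\Worlds,\UniType}$ --- is regarded as a type in a sufficiently large ambient universe $\VV$, exactly as in \cref{ex:universe-guarded-domain}, where the value universe $\UU$ is itself a guarded domain in a higher universe. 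Indeed, the map we build is precisely the guarded-domain structure of $\UU$ from that example, transported pointwise beneath evaluation of co-presheaves at worlds.

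First I would fix $A:\Ltr\SemType{\iota}$ and define the candidate co-presheaf $\vartheta\Sub{\SemType{\iota}}(A)$ on objects. The unary delayed substitution $\prn{Z\leftarrow A}\DSubst Z:\SemType{\iota}$ introduces a co-presheaf variable $Z$ beneath $\Ltr$, so that for each world $w$ the evaluation $Zw:\UU$ is well-typed; since $\UU$ is closed under the later modality, setting
\[
  \vartheta\Sub{\SemType{\iota}}(A)(w)\triangleq\Ltr\brk{Z\leftarrow A}.Zw
\]
yields an element of $\UU$, matching the stated formula. Then I would supply the functorial action. For $w\leq w'$ in $\Worlds$, the variable $Z$ carries its own action $w'_*:Zw\to Zw'$ in the context extended by $Z$, and \textsc{later functoriality} lets me transport it beneath the delayed substitution: given $u:\Ltr\brk{Z\leftarrow A}.Zw$, I extend the substitution and send $u$ to $\Next\brk{Z\leftarrow A,x\leftarrow u}.\prn{w'_*\,x}$ (where the inner $w'_*$ is the action of $Z$). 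This inhabits $\Ltr\brk{Z\leftarrow A,x\leftarrow u}.Zw'=\Ltr\brk{Z\leftarrow A}.Zw'$ by \textsc{later weakening}, since $Zw'$ does not mention $x$, and so has the required type $\vartheta\Sub{\SemType{\iota}}(A)(w)\to\vartheta\Sub{\SemType{\iota}}(A)(w')$.

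Finally I would check that $\vartheta\Sub{\SemType{\iota}}(A)$ satisfies the functor laws, so that it is a genuine element of $\SemType{\iota}$, and that $A\mapsto\vartheta\Sub{\SemType{\iota}}(A)$ is itself a function into $\SemType{\iota}$. Preservation of identities reduces to the identity law for $Z$ together with \textsc{next variable}, which gives $\Next\brk{Z\leftarrow A,x\leftarrow u}.x=u$; preservation of composites reduces to the composition law for $Z$ together with the equational calculus of delayed substitutions (\textsc{next force}, \textsc{later force}, and the weakening/exchange rules), which commute $\Ltr\brk{Z\leftarrow A}.\prn{-}$ past the operations used to state the laws. I expect this bookkeeping to be the only real obstacle: each law holds pointwise for $Z$ and is preserved by the later modality, but making this precise requires carefully tracking the substitution $\prn{Z\leftarrow A}$ --- and, where two morphisms are composed, the extended substitutions --- through each rule. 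The construction is uniform in $\iota$, the sole difference between $\SemPType$ and $\SemLType$ being the value universe $\UU\in\brc{\UniSet,\UniType}$, which plays no role beyond being closed under $\Ltr$; the same argument therefore establishes both cases simultaneously.
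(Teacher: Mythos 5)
Your proposal is correct and matches the paper's approach: the paper treats this observation as an immediate pointwise transport of the guarded-domain structure on the universe from \cref{ex:universe-guarded-domain}, defining $\vartheta\Sub{\SemType{\iota}}\prn{A}\prn{w} = \Ltr\brk{Z\leftarrow A}.Zw$ exactly as you do, and since a guarded domain is a mere $\Ltr$-algebra there are no laws beyond well-definedness. Your additional bookkeeping --- the functorial action via \textsc{later functoriality} and \textsc{later weakening}, and the functor laws via \textsc{next variable} and \textsc{next force} (which hold since the laws for the variable $Z$ are available by equality reflection in extensional \iGDTT{}) --- is precisely the routine verification the paper leaves implicit.
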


\subsubsection{Semantic heaplets; total heaps and partial heaps}

The semantic notion of \emph{heap} or \emph{memory} can be specified in terms
of the more general \DefEmph{heaplet distributor} on $\Worlds$. This is the
distributor $\Hl{\Heaplet :
  \OpCat{\Worlds}\times\Worlds\to \UniSet}$ that classifies heaps whose layout is
governed by the contravariant parameter and whose values vary in the covariant
parameter.
\[
  \Heaplet\prn{w^-,w^+} \triangleq
  \Prod{l\in \vrt{w^-}}
  \vartheta\Sub{\SemPType}\prn{w^-l}\, w^+
  =
  \Prod{l\in\vrt{w^-}}
  \Ltr\brk{Z\leftarrow w^-l}. Zw^+
\]

Using the notion of a heaplet, we can define \DefEmph{partial heaps}
and \DefEmph{total heaps} and at a given world; the latter are used to
interpret the state monad of \FMuRef{}, whereas the former are used to
interpret the program logic. Partial heaps will be functorial in worlds,
whereas total heaps are a non-functorial derived form.

\begin{definition}
  A \DefEmph{partial heap} $h$ at a world $w$ is given by a finite subtype
  $\vvrt{h}\FinSubset\vrt{w}$ together with a heaplet
  $\eta_h:\Heaplet\prn{w\Sub{\vvrt{h}},w}$. We shall write $\vrt{h}\triangleq
  w\Sub{\vvrt{h}}$ for the supporting world; given $l\in \vvrt{h}$ we shall
  write $hl$ for $\eta_hl$.  Partial heaps are arranged into a functor
  $\Hl{\ParHeap : \Worlds\to\UniSet}$; the covariant functoriality in worlds
  $w\leq w'$ takes a partial heap $h:\ParHeap w$ to $w'_*h \triangleq
  \prn{\vvrt{h}, w'_*\eta_h}$.
\end{definition}

\begin{definition}
  Two partial heaps are \DefEmph{disjoint} from each other when their
  supports do not intersect. This property is both decidable and functorial
  in $\Worlds$, so it yields a decidable subobject of
  $\ParHeap\times\ParHeap$ in $\SemLType$. In the internal language of
  $\SemLType$, we will write $h \# h'$ to mean that $h$ and $h'$ are
  disjoint.
\end{definition}

\begin{construction}
  We may define an internal \DefEmph{partial commutative monoid} structure
  on $\ParHeap$ in $\SemLType$. The unit is the empty heap $\emptyset$,
  and the partial multiplication $h_1\cdot h_2$ is defined when $h_1 \# h_2$
  as follows:
  \begin{align*}
    \vvrt{h_1 \cdot_w h_2} &\triangleq \vvrt{h_1} \cup \vvrt{h_2}
    \\
    \eta\Sub{h_1\cdot_w h_2}\ell &\triangleq
    \begin{cases}
      \vrt{h_1\cdot_w h_2}_*\eta_{h_1}\ell &\text{ if $\ell\in \vvrt{h_1}$}
      \\
      \vrt{h_1\cdot_w h_2}_*\eta_{h_2}\ell &\text{ if $\ell\in \vvrt{h_2}$}
    \end{cases}
  \end{align*}
\end{construction}

\begin{definition}
  A \DefEmph{total heap} at a world $w$ is a partial heap $h:\ParHeap w$ such that
  $\vrt{h}=w$; this is not functorial in $\Worlds$, but we may write total
  heaps as a functor  $\Hl{\TotalHeap_{\bullet} : \vrt{\Worlds}\to \UniSet}$
  where $\vrt{\Worlds}$ is the underlying discrete category of $\Worlds$. We
  shall abusively write $\AllHeap$ for the dependent sum
  $\Sum{w:\Worlds}\TotalHeap_{w}$ that bundles a heap with its world.  We will
  write $\TotalHeap\Sub{\# w}$ to mean the type of total heaps whose support is
  disjoint from $w$.
\end{definition}
\subsubsection{Semantic domains for predicates}\label{sec:predicate-domains}

Here we describe the semantic domains that govern predicates and entailments;
these domains will ultimately form the basis for a BI-hyperdoctrine $\BIHyp$
over $\SemLType$, to be described later. We shall denote by $\WProp$ the
Hofmann--Streicher lifting of $\UniP$ into $\SemLType =
\brk{\Worlds,\UniType}$, defining $\BIAlg$ to be the internal poset of
$\WProp$-valued co-presheaves on the partial commutative monoid $\ParHeap$
under its extension order:
\begin{gather*}
  \begin{array}{l}
    \WProp : \SemLType\\
    \WProp w \triangleq \brk{w\downarrow \Worlds, \UniP}
  \end{array}
  \qquad
  \begin{array}{l}
    \BIAlg : \SemLType\\
    \BIAlg \triangleq \brk{\ParHeap, \WProp}
  \end{array}
\end{gather*}

\begin{notation}[Forcing for $\WProp$]\label[notation]{notation:world-kripke-joyal}
  For any $X:\SemLType$, $\phi:X\to \WProp$ and $w:\Worlds$ and $x:Xw$, we
  shall write $\Hl{\WForce{w}{\phi\,x}}$ in \iGDTT{} to mean that $\Hl{\phi_w x\, w}$
  holds.
\end{notation}

\begin{notation}[Forcing for $\BIAlg$]\label[notation]{notation:heap-kripke-joyal}
  Let $X:\SemLType$ be a semantic type; then in the internal language of
  $\SemLType$, for any $\phi : \BIAlg^X$, $h : \ParHeap$, and $x:X$, we shall
  write $\Hl{\HForce{h}{\phi\,{x}}}$ to mean that $\Hl{\phi\,x\,h}$ holds.
\end{notation}

We note that $\SemLType$ inherits~\cite{palombi-sterling:2023} from \iGDTT{}
a later modality $\Ltr$ defined pointwise; it follows that $\WProp$ is
closed under a later modality $\PropLtr : \WProp\to\WProp$ satisfying
$\WForce{w}{\PropLtr\prn{\phi\,x}} \LEquiv \Ltr\prn{\WForce{w}{\phi\,x}}$.

\subsubsection{Interpretation of judgmental structure}

We summarize the interpretation of the judgmental structure of \LMuRef{}
below:

\begin{enumerate}

\item Type contexts \JdgBox{\IsTpCtx{\Xi}} are interpreted as semantic
  logical types $\bbrk{\Xi} : \UniType$.

\item Element contexts \JdgBox{\IsCtx{\Xi}{\Gamma}} are interpreted as
  families $\bbrk{\Gamma} : \bbrk{\Xi}\to\SemLType$.

\item Types \JdgBox{\IsTp{\iota}{\Xi}{A}} are interpreted as families
  $\bbrk{A} : \bbrk{\Xi}\to\SemType{\iota}$.

\item Elements \JdgBox{\IsEl{\Xi}{\Gamma}{a}{A}} are interpreted as elements $\bbrk{a} : \Prod{\xi:\bbrk{\Xi}}\bbrk{\Gamma}\xi \to \bbrk{A}\xi$.

\item Propositions \JdgBox{\IsProp{\Xi}{\Gamma}{\phi}} are interpreted as
  predicates $\bbrk{\phi} : \Prod{\xi:\Xi}\BIHyp{\bbrk{\Gamma}\xi}$.

\item Entailments \JdgBox{\Entails{\Xi}{\Gamma}{\phi}{\psi}} are
  interpreted as parameterized inequalities $\Forall{\xi:\bbrk{\Xi}}
  \bbrk{\phi}\xi\leq\Sub{\BIHyp{\bbrk{\Gamma}\xi}} \bbrk{\psi}\xi$.

\end{enumerate}

\subsubsection{Recursive types, general reference types, and the monad}

In our semantics, recursive types are computed using the guarded domain
structure of the semantic universes $\SemType{\iota}$; general
reference types are defined pointwise as a subtype of the world's support; the
monad is likewise defined pointwise using a combination of universal types,
existential types, and the guarded lifting monad:
\begin{gather*}
  \begin{array}{l}
    \mu : \prn{\SemType{\iota}\to\SemType{\iota}}\to \SemType{\iota}\\
    \mu F = \Con{gfix}\,\prn{\vartheta\Sub{\SemType{\iota}} \circ \Ltr{F}}
  \end{array}
  \
  \begin{array}{l}
    \Con{ref} : \SemPType\to\SemPType\\
    \Con{ref}\,A\,w =
    \Compr{\ell : \DelimMin{1} \vrt{w}}{
    w \ell = \Next{A}
    }
  \end{array}
  \quad
  \begin{array}{l}
    \MonadT : \SemPType\to \SemPType\\
    \MonadT{A}w =
    \Forall{w'\geq w}
    \TotalHeap_{w'}
    \to
    \MonadL
    \Exists{w''\geq w'}
    \TotalHeap_{w''}
    \times
    Aw''
  \end{array}
\end{gather*}

We do not have the space to display the operations of the monad; we note,
however, that $\MonadT$ is $\SemPType$-enriched and therefore strong.  It
follows that the semantic type operations in this section
can be used to interpret the types of \FMuRef{}. We show how to interpret the
getter and setter for reference types in the model:
\begin{gather*}
  \begin{array}[t]{l}
    \Con{set}_A : \Con{ref}\,A\times A\to\MonadT\prn{}\\
    \prn{\Con{set}_A}_w\,\prn{\ell:\Con{ref}\,A\,w,a:A\,w}\,\prn{w'\geq w}\,\prn{h:\TotalHeap_{w'}} \triangleq\\
    \quad
    \eta\, \prn{
      \Con{pack}\, \prn{
        w', h\brk{\ell \mapsto \Con{next}\, w'_*a} , \prn{}
      }
    }
  \end{array}
  \qquad
  \begin{array}[t]{l}
    \Con{get}_A : \Con{ref}\,A\to\MonadT{A}\\
    \prn{\Con{get}_A}_w\,\prn{\ell:\Con{ref}\,A\,w}\,\prn{w'\geq w}\,\prn{h:\TotalHeap_{w'}} \triangleq\\
    \quad
    \Alert{\vartheta}\, \prn{
      \Next*\brk{B \gets w'\ell, x \gets h\ell}.\,
      \eta\, \prn{\Con{pack}\, \prn{w', h, x}}
    }
  \end{array}
\end{gather*}

Note that $\Con{set}_A$ returns immediately in the guarded lift monad via the
unit $\eta$, whereas $\Con{get}_A$ takes a single step via the $\Ltr$-algebra
map $\vartheta$; this is because the heap stores its elements under the later
modality, so reading from memory takes one abstract step of computation in the
guarded lift monad. This is also reflected in the rule \textsc{wp-get}, which allows
an assumption to be under the later modality.

\subsubsection{Semantics of logical types}

We interpret the logical type of propositions $\TpProp$ as the internal poset
$\BIAlg = \brk{\ParHeap,\WProp}$. The interpretation of the remaining type
connectives is standard.

\subsection{Semantics of predicate connectives}\label{sec:semantics-of-predicates}

We will impose enough structure on $\BIAlg$ such that the
indexed partial order $\BIHyp : \OpCat{\SemLType}\to
\mathbf{Poset}\Sub{\UniType}$ has the structure of a BI-hyperdoctrine with
appropriate modalities ($\square$, $\rhd$) and weakest preconditions.


\subsubsection{A complete BI-algebra}

We will argue that $\BIAlg$ forms a \DefEmph{complete
BI-algebra} in $\SemLType$. Note that in this section, when we say that a
partial order is \emph{complete}, we mean that it is complete in the sense of
internal category theory~\cite{jacobs:1999}.

\NewDocumentCommand\LemWPropCHABody{}{
  $\WProp$ is a complete Heyting algebra in $\SemLType$.
}

\begin{lemma}\label[lemma]{lem:wprop-cha}
  \LemWPropCHABody
\end{lemma}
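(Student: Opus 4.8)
The plan is to transport the complete-Heyting-algebra structure that $\UniP$ already carries in \iGDTT{} --- by virtue of impredicativity and propositional extensionality --- onto the Hofmann--Streicher lifting $\WProp$, computing meets and joins pointwise while giving implication by the usual Kripke--Joyal formula. Unfolding the definition, an element of $\WProp w$ is a $\UniP$-valued co-presheaf on the up-set $w\downarrow\Worlds$, i.e.\ a monotone assignment $w'\mapsto \phi\,w' : \UniP$ for $w'\geq w$; since $\UniP$ satisfies propositional extensionality, such an element is exactly an \emph{up-closed} family of propositions, and $\WProp w$ is ordered by pointwise entailment $\phi\leq\psi \triangleq \Forall{w'\geq w}\prn{\phi\,w'\to\psi\,w'}$. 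I would first observe that this is an internal poset --- antisymmetry is immediate from propositional extensionality --- and then exhibit the lattice operations.

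For completeness, given any internal index and a family $\prn{\phi_i}_i$ of elements of $\WProp$, I would define the meet and join pointwise,
\[ \prn{\textstyle\bigwedge_i\phi_i}w' \triangleq \Forall{i}\prn{\phi_i\,w'}, \qquad \prn{\textstyle\bigvee_i\phi_i}w' \triangleq \Exists{i}\prn{\phi_i\,w'}, \]
using the impredicative universal and (bracketed) existential quantifiers of $\UniP$. The crucial point --- and the place where the result is genuinely \emph{complete} rather than merely finitary --- is that impredicativity of $\UniP$ keeps both of these in $\UniP$ even when the index type lives in $\UniType$; here I would record that the relevant index types, including each up-set $w'\downarrow\Worlds$, are $\UniType$-small because $\Worlds$ is. Monotonicity in $w'$ is automatic, since a pointwise meet or union of up-closed families is again up-closed, so both operations land back in $\WProp$ and are readily seen to be the internal infimum and supremum for the pointwise order.

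The top element is the constantly-true family and binary meet is pointwise conjunction; for the Heyting structure I would define implication by the Kripke--Joyal formula
\[ \prn{\phi\Rightarrow\psi}w' \triangleq \Forall{w''\geq w'}\prn{\phi\,w'' \to \psi\,w''}, \]
which again lands in $\UniP$ by impredicativity over the $\UniType$-small up-set $w'\downarrow\Worlds$ and is up-closed by construction. I would then verify the defining adjunction $\chi\wedge\phi\leq\psi$ iff $\chi\leq\prn{\phi\Rightarrow\psi}$; together with the pointwise meets and joins this both supplies the Heyting implication and, since each $\prn{-}\wedge\phi$ is thereby a left adjoint, yields the infinite distributive law, so that $\WProp$ is a complete Heyting algebra. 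All of these verifications reduce pointwise to the corresponding facts about $\UniP$, which is itself a frame.

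The main obstacle is getting the Heyting implication right: the naive pointwise implication of two up-closed families need not be up-closed, which is exactly why the quantification over \emph{all} future worlds $w''\geq w'$ is forced, and one must check that this formula really is right adjoint to binary meet rather than merely an upper bound. The only other point demanding care is the smallness bookkeeping that licenses the impredicative quantifiers --- confirming that every index type appearing in a meet, join, or implication is $\UniType$-small --- since it is precisely impredicativity of $\UniP$, and not any pointwise frame structure alone, that delivers internal completeness.
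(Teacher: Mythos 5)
Your overall strategy---transporting the frame structure of $\UniP$ onto $\WProp$ by Kripke-style clauses---is viable, and it is essentially how the paper handles the finitary Heyting structure (``inherited from $\UniP$ \`a la Kripke semantics over $\Worlds$''). But your treatment of \emph{completeness} has a genuine flaw. The lemma asks for completeness in the sense of \emph{internal} category theory, so meets and joins must exist for families indexed by arbitrary objects $I$ of $\SemLType$, i.e.\ $\UniType$-valued co-presheaves whose fibres can acquire new elements at later worlds. For such an internal family $\phi : I \to \WProp$, your pointwise meet $\prn{\bigwedge_I \phi}\,w' \triangleq \forall i.\ \phi_i\,w'$ is \emph{not} up-closed: if $I\,w' = \emptyset$ while some $w'' \geq w'$ has $i : I\,w''$ with $\phi_{w''}\,i\,w'' = \bot$, the pointwise formula yields $\top$ at $w'$ and $\bot$ at $w''$, violating monotonicity, so the formula does not even land in $\WProp$. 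The correct clause must quantify over future stages,
\[
  \prn{\textstyle\bigwedge_I \phi}\,w' \;=\; \forall \prn{w'' \geq w'}\,\prn{i : I\,w''}.\ \phi_{w''}\,i\,w'' ,
\]
which is exactly the forcing clause for $\forall_Y$ recorded in \cref{lem:kripke-joyal:2}. You identified precisely this phenomenon for the Heyting implication but overlooked that universal quantification suffers from it equally; among the connectives, only $\top,\bot,\land,\lor$ and the existential/join are genuinely pointwise in presheaf semantics. Your remark that ``monotonicity in $w'$ is automatic'' is true only when the index type is constant, which is insufficient for internal completeness.

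For comparison, the paper sidesteps this hand-verification: it embeds $\SemLType = \brk{\Worlds,\UniType_0}$ into $\brk{\Worlds,\UniType_1}$ and uses the Hofmann--Streicher lifting $\VV$ of $\UniType_0$, exhibiting internal products by a natural-models-style pullback square against the generic family, with the comparison map built from the internal completeness of $\UniP$ in $\UniType$; quantifying over the generic family makes the Beck--Chevalley conditions automatic upon externalization. Your explicit route can be repaired with the corrected clause above, and definitions given uniformly in the internal language are stable under reindexing, so it would then deliver the same result; your smallness bookkeeping (each $I\,w''$ is a $\UniType$-type and $\Worlds$ is $\UniType$-small, so impredicativity of $\UniP$ licenses the quantifiers) is correct and is indeed where impredicativity earns its keep.
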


\begin{corollary}
  The internal poset $\BIAlg$ is a complete
  Heyting algebra in $\SemLType$.
\end{corollary}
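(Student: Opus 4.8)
The plan is to recognize $\BIAlg = \brk{\ParHeap,\WProp}$ as the internal poset of $\WProp$-valued co-presheaves on $\ParHeap$, regarded as an internal poset under its extension order, and then to invoke the standard fact that monotone maps from any internal poset into a complete Heyting algebra again form a complete Heyting algebra. Concretely, I would take the complete Heyting algebra structure on the codomain $\WProp$ from \cref{lem:wprop-cha} and lift it to $\BIAlg$, carrying out every construction in the internal language of $\SemLType$. Since the order on $\BIAlg$ is pointwise, and since $\WProp$ is a proof-irrelevant poset, a co-presheaf $\ParHeap\to\WProp$ carries no coherence data beyond monotonicity, which removes the side conditions one would otherwise have to discharge.

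For completeness, I would compute all small meets and joins pointwise: given a family $\prn{\phi_i}_i$, set $\prn{\bigwedge_i\phi_i}h \triangleq \bigwedge_i\prn{\phi_i\,h}$ and dually for joins, using completeness of $\WProp$. Each such map is monotone in $h$ because every $\phi_i$ is, and the pointwise order then exhibits these as the genuine meet and join in $\BIAlg$. No compatibility with transition maps needs checking, precisely because $\WProp$ is posetal.

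The only structure not computed pointwise is the Heyting implication, which I would define by the Kripke-style formula $\prn{\phi\Rightarrow\psi}h \triangleq \bigwedge_{h'\geq h}\prn{\phi\,h'\Rightarrow\psi\,h'}$, using the implication and completeness of $\WProp$. I would then verify that this is monotone in $h$ --- enlarging $h$ shrinks the index set $\Compr{h'}{h'\geq h}$ and hence increases the meet --- and that it is right adjoint to meeting with $\phi$: for the adjunction $\chi\wedge\phi\leq\psi \iff \chi\leq\prn{\phi\Rightarrow\psi}$, the forward direction uses monotonicity of $\chi$ to push $\chi\,h$ up to $\chi\,h'$ at each extension $h'\geq h$, while the backward direction simply instantiates $h'=h$.

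I expect the Heyting implication to be the only real obstacle: one must confirm that the internal meet over the extension order $\Compr{h'}{h'\geq h}$ is legitimate --- which rests on completeness of $\WProp$ from \cref{lem:wprop-cha} together with $\ParHeap$ being a bona fide internal poset --- and that the adjunction argument, in particular its appeal to monotonicity of the antecedent co-presheaf, transcribes faithfully into the internal language of $\SemLType$. Everything else is pointwise and routine.
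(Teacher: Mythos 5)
Your proposal is correct and follows essentially the same route the paper intends: the corollary is exactly the standard fact that the internal poset of monotone maps from an internal poset ($\ParHeap$ under its extension order) into a complete Heyting algebra ($\WProp$, by \cref{lem:wprop-cha}) is again a complete Heyting algebra, with meets and joins pointwise and implication given by the Kripke-style formula. Your clause $\prn{\phi\Rightarrow\psi}h \triangleq \bigwedge_{h'\geq h}\prn{\phi\,h'\Rightarrow\psi\,h'}$ and the adjunction check agree with the paper's own Kripke--Joyal forcing clause for implication in \cref{lem:kripke-joyal:1}.
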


We will use \DefEmph{Day's convolution}~\cite{day:1970,day:1974} to
construct a BI-algebra structure on $\BIAlg = \brk{\ParHeap,\WProp}$.
Day's convolution product is most well-known for extending monoidal
structures on small categories to presheaves, but we will need the full
generality of his result: the categories involved are $\WProp$-enriched, and
the structure on the base is \emph{promonoidal} rather than \emph{monoidal}.

\begin{construction}[$\WProp$-enriched promonoidal structure on a pcm]\label[construction]{con:promonoidal}
  A partial commutative monoid $M = \prn{M,\emptyset,\cdot}$ in $\SemLType$ can be viewed as a $\WProp$-enriched category, because its extension order is valued in $\WProp$. As a $\WProp$-enriched category, $M$ has
  a $\WProp$-enriched \DefEmph{promonoidal structure} in the sense of
  Day~\cite{day:1970,day:1974}, which essentially encodes graph of the
  partial multiplication operation:
  \begin{enumerate}
    \item The $\WProp$-distributor $\Con{mul} : M\times \OpCat{\prn{M\times M}} \to \WProp$
      sends $\prn{m, \prn{n_0,n_1}}$ to $\exists n_2. m = n_0\cdot n_1\cdot n_2$.

    \item The $\WProp$-distributor $\Con{unit} : {M}\times\OpCat{\mathbf{1}} \to\WProp$ sends $\prn{m,*}$ to the
      proposition $\prn{m = \emptyset}$.

    \item The associativity and unit isomorphisms are defined using the
      associativity and unit laws for the partial multiplication operation.

  \end{enumerate}
\end{construction}

\begin{construction}[BI algebra]
  We obtain a BI-algebra structure $\prn{\SepConj,\SepImp}$ on $\BIAlg =
  \brk{\ParHeap,\WProp}$ by taking the Day convolution of the induced
  $\WProp$-enriched promonoidal structure (\cref{con:promonoidal}) on the
  partial commutative monoid $\ParHeap$, such that the separating conjunction
  extends the partial multiplication operation on representables. In
  particular, if $h\# h'$ are two disjoint partial heaps, then $\mathbf{y}h
  \SepConj \mathbf{y}h' = \mathbf{y}\prn{h\cdot h'}$.
\end{construction}

These constructions are explained in more detail for BI-algebras arising from
partial commutative monoids by Bizjak and
Birkedal~\cite{bizjak-birkedal:2018}.

\subsubsection{Modalities: persistence and later}

The persistence modality is interpreted as the map $\square : \BIAlg\to\BIAlg$
obtained by reindexing along the constant endomap $h\mapsto \emptyset$,
sending $\phi:\BIAlg$ to $h\mapsto \phi\, \emptyset$. The later modality
$\rhd : \BIAlg\to\BIAlg$ is given \emph{pointwise}.

\subsubsection{The points-to predicate}

We interpret the points-to predicate in the generic case
$\bbrk{\IsProp{\alpha}{\ell:\TpRef\,A,a:A}{\PointsTo{\ell}{a}}}$. In
particular, we must define for each program type $A:\SemPType$ a natural
transformation $\PointsTo{-}{-}:\Con{ref}\,A\times A\to \BIAlg$, which will
turn out to be representable by a singleton heap, \ie we may define
$\HForce{h}{\PointsTo{\ell}{a}} \LEquiv \brc{l\mapsto\Next{a}} \leq h$.

\subsubsection{Weakest preconditions}

Finally we must interpret the weakest precondition connective, which we do in
the generic case of
$\bbrk{\IsProp{\alpha}{\phi:\alpha\to\TpProp,u:\TpT{\alpha}}{\Lwp\,u\,\brc{x.\phi\,
x}}}$. This amounts to constructing for each semantic program type
$A:\SemPType$ a natural transformation $\Swp_A : \BIAlg^A\times \MonadT{A}\to
\BIAlg$. As the denotation of the state monad is defined world-by-world, so
must be the interpretation of $\HForce{h}{\Swp_A\,u\,\phi}$; to that end, we
give the forcing clause for $\WForce{w}\prn{\HForce{h}{\Swp_A\,u\,\phi}}$ in
the external \iGDTT{} language as follows, recalling the $\Downarrow$ notation
for the predicate lifting of $\MonadL$ from
\cref{cnstr:guarded-family-lifting}:

\iblock{
  \mhang{
    \WForce{w}\prn{
      \HForce{h}{\Swp_A\,u\,\phi}
    }
    \LEquiv
  }{
    \mrow{
      \forall\prn{w'\geq w}
      \,\prn{h_f:\ParHeap w'}
      \,\prn{h_t : \TotalHeap_{w'}}
      \,\prn{h_t = h_f\cdot w'_*h}.
    }
    \mhang{
      u\,w'\,h_t \Downarrow \lambda p.\
    }{
      \mrow{
        \exists\prn{w''\geq \vrt{h}}
        \,\prn{w_r = w''\cdot\vrt{h_f}}
        \,\prn{h' : \TotalHeap_{w''}}
        \,\prn{a:A w_r}.
      }
      \mrow{
        p = \Con{pack}\,\prn{w_r, \prn{w_r}_*h_f\cdot\prn{w_r}_*h', a}
      }
      \mrow{
        \mathrel{\land}
        \WForce{w_r}\prn{
          \HForce{\prn{w_r}_*h'}{
            \phi_{w_r}\,a = \top
          }
        }
      }
    }
  }
}

Although the definition is quite technical, the idea is simple. The denotation
of a monadic program is a guarded-recursive process taking a heap and
ultimately producing a return configuration at a larger world. In simple terms,
the weakest precondition of a predicate $\phi$ should quantify over all frames
for the starting heap and check that the process returns only configurations
satisfying $\phi$ without disturbing the frame.

\subsubsection{Explicit Kripke--Joyal translation}

We have given the interpretation of our logic in a mostly abstract--categorical
way; such an abstract presentation verifies all the ``logical'' rules of our
system, but explicit computations are needed in order to verify the rules for
weakest preconditions. In this section, we provide some tools to assist with
these explicit computations; in \cref{lem:kripke-joyal:1} we describe how to
interpret each of the main connectives of the logic as a transformer of
subobjects in the internal language of $\SemLType$.

\begin{computation}[Kripke--Joyal translation of the $\BIAlg$
  logic]\label[computation]{lem:kripke-joyal:1}
  The action of each connective on $\BIAlg=\brk{\ParHeap,\WProp}$ can be
  computed explicitly as a forcing clause in the Kripke--Joyal
  translation~\cite{maclane-moerdijk:1992}. We omit the forcing clauses for
  $\top,\bot,\land,\lor,\exists,\PropLtr$ because they are pointwise:
  \begin{align*}
    \HForce{h}{\phi\,x\Rightarrow\psi\,x} &\LEquiv
    \forall\prn{h'\geq h}.\,
    \HForce{h'}{\phi\,x} \Rightarrow \HForce{h'}{\psi\,x}
    \\
    \HForce{h}{\forall_Y \phi\prn{x,-}} &\LEquiv
    \forall\prn{h'\geq h}\,\prn{y:Y}.\,
    \HForce{h'}{\phi\,\prn{x,y}}
    \\
    \HForce{h}{\square\prn{\phi\,x}} &\LEquiv
    \HForce{\emptyset}{\phi\,x}
    \\
    \HForce{h}{\phi\,x\SepConj\psi\,x} &\LEquiv
    \exists\prn{h_1\cdot h_2 = h}.\,
    \HForce{h_1}{\phi\,x} \land
    \HForce{h_2}{\psi\,x}
    \\
    \HForce{h}{\phi\,x\SepImp\psi\,x} &\LEquiv
    \forall\prn{h'\# h}.\,
    \HForce{h'}{\phi\,x} \Rightarrow
    \HForce{h\cdot h'}{\psi\,x}
    \\
    \HForce{h}{\PointsTo{l}{a}} &\LEquiv
    \brc{\ell\mapsto \Next{a}}\leq h
  \end{align*}
\end{computation}

\begin{computation}[Kripke--Joyal translation of the $\WProp$ logic]\label[computation]{lem:kripke-joyal:2}
  The connectives on $\WProp$ can be further computed in terms of the ambient
  \iGDTT{} model by another layer of Kripke--Joyal forcing:
  \begin{align*}
    \WForce{w}{\phi\,x\Rightarrow\psi\,x} &\LEquiv
    \forall\prn{w'\geq w}.\,
    \WForce{w'}{\phi\,\prn{w'_*x}} \Rightarrow\WForce{w'}{\psi\,\prn{w'_*x}}
    \\
    \WForce{w}{\forall_Y\phi\prn{x,-}} &\LEquiv
    \forall\prn{w'\geq w}\,\prn{y:Yw'}.\,
    \WForce{w'}{\phi\,\prn{w'_*x,y}}
  \end{align*}
\end{computation}

\subsection{Soundness results}\label{sec:soundness}

The following results are stated internally to \iGDTT.

\begin{theorem}[Soundness]
  If \Hl{$\Entails \Xi \Gamma \phi \psi$} is derivable in \LMuRef{},
  then for any $\xi : \bbrk{\IsTpCtx \Xi}$ it holds that
  $\bbrk{\IsProp \Xi \Gamma \phi} \xi \leq \bbrk{\IsProp \Xi \Gamma \psi} \xi$ in
  $\BIHyp{\bbrk{\IsCtx \Xi \Gamma} \xi}$.
\end{theorem}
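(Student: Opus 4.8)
The plan is to proceed by induction on the derivation of the entailment \Hl{$\Entails{\Xi}{\Gamma}{\phi}{\psi}$}, checking that the interpretation validates each rule of \LMuRef{} pointwise in $\xi:\bbrk{\IsTpCtx{\Xi}}$. I would organize the cases into three groups handled by three different mechanisms: the purely logical rules, the equational rules, and the weakest-precondition rules. The first two groups are discharged abstractly, and only the last requires genuine computation.

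First, the rules of intuitionistic higher-order logic and of affine separation logic, together with the persistence modality $\square$, I would discharge wholesale by appealing to the structure already imposed on the indexed poset $\BIHyp$. Since $\BIAlg$ is a complete Heyting algebra (by \cref{lem:wprop-cha} and its corollary) equipped with a BI-algebra structure $\prn{\SepConj,\SepImp}$ obtained by Day convolution of the $\WProp$-enriched promonoidal structure of \cref{con:promonoidal}, the hyperdoctrine $\BIHyp$ is a BI-hyperdoctrine and validates all of these rules by construction; no further work is needed beyond unwinding definitions. The equational rules (\textsc{Lawvere}, \textsc{equality reflection}, and the congruence rules inherited from the \FMuRef{} substrate) are sound because propositional equality $\EqProp{A}{u}{v}$ is interpreted by the diagonal in $\SemLType$, predicates range over genuine denotations rather than raw syntax, and the whole interpretation is defined so as to respect judgmental equality; in particular the silent coercions between program and logical types are interpreted by the inclusion $\UniSet\hookrightarrow\UniType$ and pose no obstruction.

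Next, the later modality and its Löb induction principle I would validate using the guarded structure that $\SemLType$ inherits from \iGDTT{}. The pointwise later modality $\PropLtr:\WProp\to\WProp$ satisfies $\WForce{w}{\PropLtr\prn{\phi\,x}} \LEquiv \Ltr\prn{\WForce{w}{\phi\,x}}$, so the monotonicity and commutation laws for $\rhd$ reduce to the corresponding facts about $\Ltr$ in the metatheory, while \textsc{L\"ob induction} follows directly from the L\"ob recursor $\Con{gfix}$. The \textsc{fold equality} and \textsc{step equality} rules are where the abstract later meets the concrete semantics: both are validated by unfolding the guarded domain equation defining recursive types and by matching the single $\vartheta$-step emitted by $\TmUnfold$ and by $\TmGet$ against the $\Ltr$ guarding the antecedent.

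The main obstacle, as anticipated, is the group of weakest-precondition rules, which do not follow from abstract structure and must be verified by explicit computation. Here I would use the Kripke–Joyal forcing clauses of \cref{lem:kripke-joyal:1,lem:kripke-joyal:2} together with the forcing clause defining $\WForce{w}\prn{\HForce{h}{\Swp_A\,u\,\phi}}$ and the world-by-world definitions of $\MonadT$, $\Con{get}_A$, $\Con{set}_A$, and allocation. For each of \textsc{wp-val}, \textsc{wp-bind}, \textsc{wp-get}, \textsc{wp-set}, \textsc{wp-new}, \textsc{wp-step}, and \textsc{wp-wand} I would unfold both sides into the external \iGDTT{} language, quantifying over frames $h_f$, larger worlds $w'\geq w$, and total heaps, and then match the frame-preserving return configurations produced by the monad against the postcondition. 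The delicate points are threading the frame $h_f$ through the existential repackaging $\Con{pack}\,\prn{w_r,\dots}$ and, above all, correctly accounting for the single guarded step: because $\Con{get}_A$ applies the algebra map $\vartheta$ whereas $\Con{set}_A$ and allocation return immediately via $\eta$, the step lands under $\Ltr$ exactly for \textsc{wp-get} (explaining the $\rhd$ in its antecedent) and for \textsc{wp-step}, but not for \textsc{wp-set} or \textsc{wp-new}. Reconciling this step discipline with the placement or absence of the later modality in each antecedent is the crux of the entire soundness argument.
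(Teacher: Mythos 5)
Your proposal is correct and follows essentially the same route as the paper: the logical, separation-logic, and modality rules are discharged abstractly via the BI-hyperdoctrine structure on $\BIHyp$ (the paper cites Biering--Birkedal--Torp-Smith for exactly this), while the weakest-precondition rules are verified by explicit Kripke--Joyal computation in the external \iGDTT{} language, which the paper illustrates only for \textsc{wp-get} in \cref{lem:soundness-wp-get}. Your additional elaboration of the equational and guarded fragments, including the step discipline distinguishing $\Con{get}_A$'s use of $\vartheta$ from $\Con{set}_A$'s immediate return via $\eta$, is consistent with the paper's constructions and fills in what the paper summarizes as ``the rules for the modalities follow similarly.''
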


\begin{proof}
  Since $\BIHyp$ is a BI-hyperdoctrine, all the specified rules of higher-order
  separation logic are valid~\cite{biering-birkedal-torp-smith:2007}, and the
  rules for the modalities follow similarly. What remains is to verify rules
  for weakest preconditions; we show just one illustrative case in
  \cref{lem:soundness-wp-get}.
\end{proof}

\begin{lemma}\label[lemma]{lem:soundness-wp-get}
  The following weakest precondition law for $\TmGet$ is valid:
  \[
    \Entails{\alpha}{\phi:\alpha\to\TpProp, \ell:\TpRef\,\alpha, a:\alpha}{
      \PointsTo{\ell}{a}
      \SepConj
      \PropLtr\prn{
        \PointsTo{\ell}{a}
        \SepImp
        \phi\, a
      }
    }{
      \Lwp\,\prn{\TmGet\,\ell}\,\brc{x.\phi\, x}
    }
  \]
\end{lemma}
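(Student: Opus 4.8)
The plan is to unfold both sides through the two layers of Kripke--Joyal forcing supplied by \cref{lem:kripke-joyal:1,lem:kripke-joyal:2} and then to align the single computation step performed by $\Con{get}$ with the single $\PropLtr$ guarding the magic wand in the antecedent. Fix a valuation $\xi$ of the type context, write $A$ for the denotation of $\alpha$, and let $w$ be a world with $h : \ParHeap w$ carrying the antecedent. By the clause for $\SepConj$ I obtain a splitting $h = h_1 \cdot h_2$ with $h_1 \# h_2$, where $\HForce{h_1}{\PointsTo{\ell}{a}}$ unfolds to $\brc{\ell\mapsto\Next a}\leq h_1$ (so $\ell\in\vvrt h$ and $h\ell = \Next a$), and $\HForce{h_2}{\PropLtr\prn{\PointsTo{\ell}{a}\SepImp\phi\,a}}$; since $\rhd$ is interpreted pointwise, the latter is, at every world, the $\Ltr$ of the ordinary wand-forcing.

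Next I would expand the goal $\HForce{h}{\Swp_A\,\prn{\Con{get}\,\ell}\,\phi}$ using the forcing clause for $\Swp$. Fixing $w' \geq w$, a frame $h_f : \ParHeap w'$, and a total heap $h_t : \TotalHeap_{w'}$ with $h_t = h_f \cdot w'_*h$, I must show $\prn{\Con{get}\,\ell}\,w'\,h_t \Downarrow \Psi$, where $\Psi$ is the displayed postcondition. The crucial computation is that $\Con{get}$ returns via the $\Ltr$-algebra map $\vartheta$: using $\Phi^{\MonadL}\prn{\vartheta\,t} = \Ltr\brk{s \gets t}.\ \Phi^{\MonadL}\,s$ and $\Phi^{\MonadL}\prn{\eta\,p} = \Phi\,p$ (both immediate from the $\MonadL$-algebra structure on the universe behind \cref{cnstr:guarded-family-lifting}), the goal reduces to
\[
  \Ltr\brk{B \gets w'\ell,\ x \gets h_t\ell}.\ \Psi\prn{\Con{pack}\prn{w', h_t, x}}.
\]
Thus the whole obligation now sits under $\Ltr$, which is exactly the fuel needed to discharge the later modality inherited from the antecedent.

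To finish, I would compute the value read from memory and instantiate the existentials of $\Psi$. Functoriality gives $h_t\ell = w'_*\prn{h\ell} = \Next\prn{w'_*a}$, so the \textsc{next force} rule rewrites $x$ to $w'_*a$. Since $\Con{get}$ neither allocates nor alters the heap, I take $w'' := \vrt h$, whence $w_r = w''\cdot\vrt{h_f} = w'$, take $h'$ to be $w'_*h$ viewed as a total heap at its support, and take the return value to be $w'_*a$; the configuration equation holds because $\prn{w_r}_*h_f \cdot \prn{w_r}_*h' = h_f \cdot w'_*h = h_t$. The residual obligation $\WForce{w'}{\HForce{w'_*h}{\phi_{w'}\prn{w'_*a} = \top}}$ is discharged by cashing the $\PropLtr$ of the second conjunct against the $\Ltr$ exposed above, yielding the bare wand $\PointsTo{\ell}{a}\SepImp\phi\,a$ one step later, and then feeding it the points-to witnessed by $h_1$ (legal since $h_1 \# h_2$) to conclude $\HForce{h}{\phi\,a}$, transported up to $w'$.

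The main obstacle is not any single deduction but the simultaneous bookkeeping across the two forcing layers: tracking the functorial transports $w'_*$ and $\prn{w_r}_*$, checking that $\Con{get}$'s returned world and heap recombine exactly to $w'$ and $h_t$ (so that no frame is disturbed and no location is allocated), and --- most delicately --- verifying that the one step taken by $\vartheta$ lines up precisely with the single $\PropLtr$ in the hypothesis, so that the wand becomes available exactly when the postcondition must be checked.
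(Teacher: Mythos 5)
Your proposal is correct and follows essentially the same route as the paper's proof: both pass through the two Kripke--Joyal layers of \cref{lem:kripke-joyal:1,lem:kripke-joyal:2}, compute $\Con{get}_A\,\ell\,w'\,h_t$ to the form $\vartheta\,\prn{\Next\prn{\eta\,\prn{\Con{pack}\,\prn{w',h_t,w'_*a}}}}$ so that the whole postcondition sits under $\Ltr$, discharge the $\PropLtr$ in the hypothesis against that step, and instantiate the existentials with the unchanged world and heap (your $w''\triangleq\vrt{h}$, $w_r = w'$, $h'\triangleq w'_*h$ matches the paper's $w''\triangleq\vrt{h_1}\cdot\vrt{h_2}$, $h'\triangleq w''_*h_1\cdot w''_*h_2$) before feeding the points-to fragment to the wand. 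Your explicit use of \textsc{next force} to rewrite the read value merely spells out what the paper compresses into the remark that $H_1$ determines the value of $\Con{get}$.
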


\begin{proof}
  We will follow the Kripke--Joyal unfolding of the separation logic from \cref{lem:kripke-joyal:1}.
  Fixing $A:\SemPType$, we must prove the following:
  \iblock{
    \mhang{
      \forall\,\prn{h_1\# h_2 : \ParHeap}\,
      \prn{\phi : A\to \BIAlg}\,
      \prn{\ell:\Con{ref}\,A}\,
      \prn{a:A}.
    }{
      \mrow{
        \prn{\brc{\ell\mapsto\Next{a}}\leq h_1}
        \land
        \PropLtr\prn{
          \forall\prn{h_3\# h_2}.\
          {\brc{\ell\mapsto a}\leq h_3}
          \Rightarrow
          \HForce{h_2\cdot h_3}{\phi\,a}
        }
      }
      \mrow{
        \mathrel{\Rightarrow}
        \HForce{h_1\cdot h_2}{\Swp_A\,\prn{\Con{get}_A\,\ell}\,\phi}
      }
    }
  }

  At this point, the only way to unfold further is to pass through a second
  Kripke--Joyal translation (\cref{lem:kripke-joyal:2}), where the indexing
  comes from $\Worlds$ rather than $\ParHeap$. What follows, therefore, will be
  in ambient \iGDTT{} language rather than the internal language of
  $\SemLType$; in particular, we assume the following:

  \iblock{
    \mhang{
      \prn{w:\Worlds}\,
      \prn{h_1\# h_2 : \ParHeap w}\,
      \prn{\phi : \mathcal{B}^Aw}\,
      \prn{\ell : \Con{ref}\,A\,w}\,
      \prn{a : A\,w}\
      \text{such that:}
    }{
      \mrow{H_1 : \brc{\ell\mapsto \Next{a}}\leq h_1}
      \mrow{
        H_2 :
        \Ltr\prn{
          \DelimMin{1}
          \forall \prn{w'\geq w}\,\prn{h_3\# w'_*h_2 : \ParHeap w'}.\,
          {
            \brc{\ell\mapsto \Next{w'_*a}} \leq h_3
          }
          \Rightarrow
          \WForce{w'}\prn{
            \HForce{w'_*h_2\cdot h_3}{\phi_{w'}\,\prn{w'_*a} = \top}
          }
        }
      }
    }
  }

  Our goal is to prove $\WForce{w}\prn{\HForce{h_1\cdot
  h_2}{\Swp_A\,\prn{\Con{get}_A\,\ell}\,\phi}}$; we fix a world $w'\geq w$, a
  frame $h_f:\ParHeap w'$ and a total heap $h_t : \TotalHeap_{w'}$ such that
  $h_t = h_f\cdot w'_*h_1\cdot w'_*h_2$ to prove the following:

  \iblock{
    \mhang{
      \Con{get}_A\,\ell\,w'\,h_t \Downarrow \lambda p.\
    }{
      \mrow{
        \exists\prn{w''\geq \vrt{h_1\cdot h_2}}
        \,\prn{w_r = w''\cdot\vrt{h_f}}
        \,\prn{h' : \TotalHeap_{w''}}
        \,\prn{a:A w_r}.
      }
      \mrow{
        p = \Con{pack}\,\prn{w_r, \prn{w_r}_*h_f\cdot\prn{w_r}_*h', a}
      }
      \mrow{
        \mathrel{\land}
        \WForce{w_r}\prn{
          \HForce{\prn{w_r}_*h'}{
            \phi_{w_r}\,a = \top
          }
        }
      }
    }
  }

  Our assumption $H_1$ guarantees that $\Con{get}_A\,\ell\,w'\,h_t$ is equal to
  $\vartheta\,\prn{\Next\prn{\eta\,\prn{\Con{pack}\,\prn{w',h_t,w'_*a}}}}$.
  Therefore, our goal reduces to the following by definition of the
  $\MonadL$--predicate lifting:

  \iblock{
    \mhang{
      \Ltr\exists\prn{w''\geq \vrt{h_1\cdot h_2}}
      \,\prn{w_r = w''\cdot\vrt{h_f}}
      \,\prn{h' : \TotalHeap_{w''}}
      \,\prn{x:A w_r}.
    }{
      \mrow{
        \Con{pack}\,\prn{w',h_t,w'_*a} = \Con{pack}\,\prn{w_r, \prn{w_r}_*h_f\cdot\prn{w_r}_*h', x}
      }
      \mrow{
        \mathrel{\land}
        \WForce{w_r}\prn{
          \HForce{\prn{w_r}_*h'}{
            \phi_{w_r}\,x = \top
          }
        }
      }
    }
  }

  Going under the later modality in the goal, we discharge the corresponding
  modality from $H_2$. Then we instantiate $w'' \triangleq
  \vrt{h_1}\cdot\vrt{h_2}$ and $w_r \triangleq w'$ and $h' \triangleq
  w''_*h_1\cdot w''_*h_2$ and $x\triangleq w'_*a$. Our remaining goal is
  $
    \WForce{w'}\prn{
      \HForce{w'_*h_1\cdot w'_*h_2}{
        \phi_{w'}\,\prn{w'_*a} = \top
      }
    }
  $, which follows by instantiating $H_2$ with $w' \triangleq w'$ and $h_3\triangleq w'_*h_1$.
\end{proof}

\NewDocumentCommand\CorConsistencyBody{}{
  It is not the case that $\top \leq \bot$ in $\BIHyp{\bbrk{\EmpCx}\xi}$ for
  any $\xi : \bbrk{\Xi}$.
}

\begin{corollary}[Consistency]\label[corollary]{cor:consistency}
  \CorConsistencyBody
\end{corollary}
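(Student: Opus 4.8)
The plan is to establish the stated semantic inequality directly; consistency of $\LMuRef{}$ then follows from the Soundness theorem, since were $\top\vdash\bot$ derivable, soundness would force $\top\leq\bot$ in $\BIHyp{\bbrk{\EmpCx}\xi}$, exactly what we are about to rule out. First I would observe that the interpretation of the empty element context is the terminal object of $\SemLType$, so that $\BIHyp{\bbrk{\EmpCx}\xi}$ is just the complete Heyting algebra $\BIAlg=\brk{\ParHeap,\WProp}$, and the predicates $\top$ and $\bot$ are its top and bottom elements. The internal order on $\BIAlg$ amounts to a pointwise inequality over $\ParHeap$ valued in $\WProp$, so to refute $\top\leq\bot$ it suffices to exhibit a single world $w$ and heap $h:\ParHeap\,w$ at which $\top$ is forced but $\bot$ is not.

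I would take the empty world $w_\emptyset:\Worlds$ (the finite mapping with empty support) together with the empty heap $\emptyset:\ParHeap\,w_\emptyset$. Choosing an inhabited stage is the one point that must not be overlooked, since otherwise the failure of $\top\leq\bot$ could only be witnessed vacuously. By the Kripke--Joyal clauses of \cref{lem:kripke-joyal:1}, the constants $\top$ and $\bot$ are computed pointwise, so $\HForce{\emptyset}{\top}$ holds trivially, while $\HForce{\emptyset}{\bot}\LEquiv\WForce{w_\emptyset}{\bot}$; passing through the second layer of forcing (\cref{lem:kripke-joyal:2}), or simply by the pointwise definition of $\WProp$, this reduces to the false proposition $\bot_{\UniP}$ of the impredicative universe of propositions. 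Thus the required refutation reduces entirely to the non-degeneracy of $\UniP$, which is precisely the consistency of \iGDTT{} recorded in \cref{sec:igdtt-consistency}.

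Since the soundness results are stated internally to \iGDTT{}, it is worth noting that the whole argument is also internal: there $\top\leq\bot$ unfolds, at the empty stage, to an implication whose conclusion is $\bot_{\UniP}$ and which is therefore equivalent to the metatheoretic $\bot$, so its negation is internally provable, being inhabited by the identity $\bot\to\bot$. Consequently there is no genuine mathematical obstacle beyond bookkeeping: the only content is the consistency of the ambient metatheory, and the statement collapses cleanly onto it. The main thing to be careful about is keeping the internal and external readings aligned across the two layers of Kripke--Joyal translation, and confirming that both $\top$ and $\bot$ remain pointwise at each layer, so that the forcing computation bypasses the separating connectives, the points-to predicate, and the later modality altogether and lands directly on $\bot_{\UniP}$.
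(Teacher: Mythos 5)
Your proof is correct and follows essentially the same route as the paper: both arguments unfold the pointwise definitions of the order and of $\top$, $\bot$ through $\BIHyp{\bbrk{\EmpCx}\xi}$, $\BIAlg$, and $\WProp$ down to $\UniP$, and conclude from the non-degeneracy of $\UniP$, i.e.\ the consistency of \iGDTT{}. Your explicit choice of the empty world and empty heap is just a concrete witness of the paper's observation that the relevant stages are inhabited (the paper phrases this via the terminal object being inhabited), so no further comparison is needed.
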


\section{Conclusions; related and future work}

\subsection{Comparison with operationally-based program logics}

We have contributed a program logic \LMuRef{} over the equational theory of
polymorphic, general recursive, higher-order stateful programs by building on
the recent denotational semantics of general references and polymorphism of
Sterling~\etal~\cite{sterling-gratzer-birkedal:2022}, adapting many ideas that
were first developed in the context of operationally based program logics.
Prior works on the operational side such as Iris~\cite{iris:2018}, the Verified
Software Toolchain~\cite{appel:2011}, and
TaDA~\cite{de-rocha-pinto-dinsdale-young-garnder:2014} have reached great
heights of expressivity, incorporating constructs such as invariants and
higher-order ghost state which are critical for reasoning about concurrent
programs.
For the sake of simplicity, we have restricted our attention to a \emph{fixed}
notion of resource (partial heaps), but we hope in the future to adapt more
sophisticated constructs including higher-order ghost state, \etc to reach
parity with existing operationally-based program logics.

\subsection{Other denotationally-based program logics for state}

The model of Sterling~\etal is not the only denotational semantics of state.
First-order store, both local and global, is well-represented in the
literature~\cite{oles:1986,moggi:1991,plotkin-power:2002,staton:2010}; there is
also Levy's model of non-polymorphic higher-order store~\cite{levy:2002}, and
notably, a model of \emph{\textbf{local} full ground store} by
Kammar~\etal~\cite{kammar-levy-moss-staton:2017}.  Polzer and
Goncharov~\cite{polzer-goncharov:2020} have constructed a BI-hyperdoctrine over
the denotational semantics of Kammar~\etal, and our own work is much in the
spirit of theirs. However, there is an apparent mismatch between the semantics
of \emph{local store} and the model of bunched implications over it, which has
impeded \opcit from developing a full program logic with an interpretation of
weakest preconditions.


\subsection{Future perspectives}

One of the methodological questions raised by our work is where, exactly, to
draw the line between equational reasoning and logical reasoning. For instance,
conventional operationally-based program logics do not use equational reasoning
at all: our logic, in contrast, allows some equational reasoning but it is
limited by the intensionality of Sterling~\etal's model. One possible direction
for future work is to attempt to make the model itself less intensional, either
by enhancing the semantic worlds~\cite{dreyer-neis-rossberg-birkedal:2010} or
by improving the interpretation of the state
monad~\cite{kammar-levy-moss-staton:2017}.

Another question is how \LMuRef{} can be implemented in a practical tool. Currently, definitional and logical equality coincide due to equality reflection, which leads the former to become undecidable. An implementation will therefore require a more refined account of the interaction between definitional and logical equality.

\section*{Acknowledgments}

We wish to thank Daniel Gratzer for many helpful discussions.  This research
was supported in part by a Villum Investigator grant (no.~25804), Center for
Basic Research in Program Verification (CPV), from the VILLUM Foundation, and
in part by the European Union under the Marie Sk\l{}odowska-Curie Actions
Postdoctoral Fellowship project
\href{https://cordis.europa.eu/project/id/101065303}{\emph{TypeSynth: synthetic
methods in program verification}}. Views and opinions expressed are however
those of the authors only and do not necessarily reflect those of the European
Union or the European Commission. Neither the European Union nor the granting
authority can be held responsible for them.

\nocite{sterling-gratzer-birkedal:2022,benabou:2000:distributors}

\bibliographystyle{entics}

\clearpage

\appendix

\section{Omitted rules}\label{sec:omitted-rules}

\subsection{Equational theory of a strong monad}

\begin{mathpar}
  \inferrule{
    \IsEl{\Xi}{\Gamma}{u}{A}
    \\
    \IsEl{\Xi}{\Gamma,x:A}{v}{\TpT\,B}
  }{
    \EqEl{\Xi}{\Gamma}{
      x\leftarrow\TmRet\,u; v
    }{
      v\brk{u/x}
    }{
      \TpT\,B
    }
  }
  \and
  \inferrule{
    \IsEl{\Xi}{\Gamma}{u}{\TpT\,A}
  }{
    \EqEl{\Xi}{\Gamma}{x\leftarrow u; \TmRet\,x}{u}{\TpT\,A}
  }
  \and
  \inferrule{
    \IsEl{\Xi}{\Gamma}{u}{\TpT\,A}\\
    \IsEl{\Xi}{\Gamma,x:A}{v}{\TpT\,B}\\
    \IsEl{\Xi}{\Gamma,y:B}{v}{\TpT\,C}
  }{
    \EqEl{\Xi}{\Gamma}{
      y\leftarrow\prn{x\leftarrow u; v}; w
    }{
      x\leftarrow u; y\leftarrow v; w
    }{\TpT\,C}
  }
\end{mathpar}

\subsection{Equational theory of universal and existential types}

\begin{mathpar}
  \inferrule{
    \IsEl{\Xi,\alpha}{\Gamma}{u}{A}
    \\
    \IsPTp{\Xi}{B}
  }{
    \IsEl{\Xi}{\Gamma}{\prn{\lambda\alpha.u}B\equiv u\brk{B/\alpha}}{A\brk{B/\alpha}}
  }
  \and
  \inferrule{
    \IsEl{\Xi}{\Gamma}{u}{\forall\alpha.A}
  }{
    \IsEl{\Xi}{\Gamma}{u\equiv \Lambda\alpha. u\cdot\alpha}{\forall\alpha.A}
  }
  \and
  \inferrule{
    \IsPTp{\Xi}{B,C}\\
    \IsPTp{\Xi,\alpha}{A}\\
    \IsEl{\Xi}{\Gamma}{u}{A\brk{B/\alpha}}\\
    \IsEl{\Xi,\alpha}{\Gamma,x:A}{v}{C}
  }{
    \IsEl{\Xi}{\Gamma}{
      \mathtt{let}\ \mathtt{pack}\,\prn{\alpha,x} =
      \mathtt{pack}\,\prn{B,u}\ \mathtt{in}\ v
      \equiv
      v\brk{B,u/\alpha,x}
    }{
      C
    }
  }
  \and
  \inferrule{
    \IsPTp{\Xi}{C}\\
    \IsEl{\Xi}{\Gamma,z:\exists\alpha.A}{u}{C}\\
    \IsEl{\Xi}{\Gamma}{v}{\exists\alpha.A}
  }{
    \IsEl{\Xi}{\Gamma}{
      u\brk{v/z}\equiv
      \mathtt{let}\ \mathtt{pack}\,\prn{\alpha,x} = v\ \mathtt{in}\ u\brk{\mathtt{pack}\,\prn{\alpha,x}/z}
    }{C}
  }
\end{mathpar}

\section{Omitted proofs}

\RestateThm{lem:wprop-cha}{\LemWPropCHABody}

\begin{proof}
  The simplest way to see this is to embed $\SemLType=\brk{\Worlds,\UniType_0}$
  into the larger co-presheaf category $\ECat = \brk{\Worlds,\UniType_1}$,
  where we have a Hofmann--Streicher lifting $\VV$ of $\UniType_0$ whose global
  points correspond to $\SemLType$. Then the \emph{completeness} can be
  expressed internally in terms of quantification over $\VV$; this internal
  quantification automatically satisfies the appropriate Beck--Chevalley
  conditions when externalized. In particular, that $\WProp$ has internal
  products then amounts to the following pullback square
  existing~\cite{awodey:2018:natural-models}:
  \[
    \DiagramSquare{
      nw/style = pullback,
      east/style = >->,
      west/style = >->,
      north = !\Sub{\VV},
      south/style = {exists,->},
      ne = \mathbf{1},
      se = \WProp{},
      east = \top,
      sw = \Sum{A:\VV}\prn{A\to \WProp},
      nw = \VV,
      west = A\mapsto \prn{A,\lambda\_.\top},
      south = \prn{\forall},
      width = 4cm,
    }
  \]

  The lower map can be be constructed explicitly using the internal
  completeness of $\UniP$ in $\UniType$. The rest of the Heyting algebra
  structure is inherited from $\UniP$ \`a la Kripke semantics over $\Worlds$.
\end{proof}

\RestateThm{cor:consistency}{\CorConsistencyBody}

\begin{proof}
  The ordering on $\BIHyp{\bbrk{\EmpCx} \xi}$ is defined pointwise
  in relative to ordering on $\BIAlg$, and its ordering is in turn defined
  pointwise relative to the ordering on $\UniP$. Similarly, $\top$ and $\bot$
  in $\BIHyp{\bbrk{\EmpCx} \xi}$ are defined pointwise relative
  to $\top$ and $\bot$ in $\BIAlg$, which in turn are defined pointwise relative
  to $\top$ and $\bot$ in $\UniP$. Since $\bbrk{\EmpCx} \xi$ is the
  terminal object, and thus in particular is inhabited, this implies that if $\top
  \leq \bot$ in $\BIHyp{\bbrk{\EmpCx} \xi}$, we also have $\top
  \leq \bot$ in $\UniP$. We thus conclude $\top \not \leq \bot$ in $\BIHyp{\bbrk{\EmpCx} \xi}$.
\end{proof}

\end{document}